\documentclass[a4paper,11pt]{article}
\usepackage{amsthm}
\usepackage[margin=2.8cm]{geometry}
\usepackage[colorlinks=true, urlcolor=blue]{hyperref}

\usepackage{todonotes}
\usepackage{amsfonts}
\usepackage{amsmath}
\usepackage{amsfonts,latexsym,amssymb,blindtext,titlesec,algorithm2e,mathrsfs,algpseudocode}
\usepackage{authblk}
\usepackage{xcolor}
\usepackage{tcolorbox}
\tcbset{colback=gray!10, colframe=gray!40, boxrule=0pt, arc=2mm, left=4mm, right=4mm, top=2mm, bottom=2mm,
 width=0.75\textwidth, center title}

\usepackage{graphicx}

\newtheorem{definition}[]{Definition}
\newtheorem{lemma}[]{Lemma}
\newtheorem{corollary}[]{Corollary}
\newtheorem{claim}[]{Claim}

 \RestyleAlgo{ruled}

\DeclareMathOperator{\poly}{poly}

\newcommand{\LP}{\textnormal{\textsf{LP}}\xspace}
\newcommand{\LPstar}{\textnormal{\textsf{LP\textsuperscript{*}}}\xspace}
\newcommand{\NLP}{\textnormal{\textsf{NLP}}\xspace}
\newcommand{\PLD}{\textnormal{\textsf{PLD}}\xspace}
\newcommand{\LD}{\textnormal{\textsf{LD}}\xspace}
\newcommand{\NLD}{\textnormal{\textsf{NLD}}\xspace}
\newcommand{\NPLD}{\textnormal{\textsf{NPLD}}\xspace}

\newcommand{\local}{\textnormal{\textsf{local}}}
\newcommand{\Plocal}{\textnormal{\textsf{P-local}}}

\newcommand{\PiOneLocal}{\Pi_1^{\local}\xspace}
\newcommand{\PiOnePLocal}{\Pi_1^{\Plocal}\xspace}
\newcommand{\PiOneLP}{\Pi_1^{\LP}\xspace}

\newcommand{\SigmaOneLocal}{\Sigma_1^{\local}\xspace}
\newcommand{\SigmaOnePLocal}{\Sigma_1^{\Plocal}\xspace}
\newcommand{\SigmaOneLP}{\Sigma_1^{\LP}\xspace}

\newcommand{\PH}{\textsf{PH}\xspace}
\newcommand{\genericalgo}{\textnormal{\textbf{A}}\xspace}
\newcommand{\genericlang}{{\textnormal{\texttt{L}}}\xspace}

\newcommand{\classP}{\textnormal{\textsf{P}}\xspace}
\newcommand{\classNP}{\textnormal{\textsf{NP}}\xspace}

\newcommand{\AGTG}{\textnormal{\texttt{AGTG}}\xspace}
\newcommand{\ALTG}{\textnormal{\texttt{ALTG}}\xspace}
\newcommand{\node}{\textnormal{\texttt{node}}\xspace}
\newcommand{\pathlang}{\textnormal{\texttt{path}}\xspace}
\newcommand{\Lsp}{\textnormal{\texttt{L}}_{\texttt{SP}}\xspace}

 \title{Polynomial Time Local Decision Revisited}
 \author[1]{Laurent Feuilloley%
 \thanks{Supported by ANR project PREDICTIONS (ANR-23-CE48-0010).}}
 \author[2]{Soumyadeep Paul}
 \author[3]{Ami Paz%
 \thanks{Supported by ANR project DIDYA (ANR-25-CE48-0897).}}
 \affil[1]{CNRS, INSA Lyon, UCBL, LIRIS, UMR5205, F-69622 Villeurbanne, France }
 \affil[2]{Tata Institute of Fundamental Research, Mumbai}
 \affil[3]{LISN --- CNRS \& Paris-Saclay University}
\date{}

\begin{document} 

\maketitle
\begin{abstract}
We consider three classification systems for distributed decision tasks: With unbounded computation and certificates, defined by Balliu, D'Angelo, Fraigniaud, and Olivetti [JCSS'18], and with (two flavors of) polynomially bounded local computation and certificates, defined in recent works by Aldema Tshuva and Oshman [OPODIS'23], and by Reiter [PODC'24]. The latter two differ in the way they evaluate the polynomial bounds: the former considers polynomials with respect to the size of the graph, while the latter refers to being polynomial in the size of each node's local neighborhood.

We start by revisiting decision without certificates. For this scenario, we show that the latter two definitions coincide: roughly, a node cannot know the graph size, and thus can only use a running time dependent on its neighborhood.

We then consider decision with certificates. With existential certificates ($\Sigma_1$-type classes), a larger running time defines strictly larger classes of languages: when it grows from being polynomial in each node's view, through polynomial in the graph's size, and to unbounded, the derived classes strictly contain each other. With universal certificates ($\Pi_1$-type classes), on the other hand, we prove a surprising incomparability result: having running time bounded by the graph size sometimes allows us to decide languages undecidable even with unbounded certificates.

We complement these results with other containment and separation results, which together portray a surprisingly complex lattice of strict containment relations between the classes at the base of the three classification systems.
\end{abstract}

\textbf{Keywords: } distributed decision, distributed certification, distributed complexity theory
 \thispagestyle{empty}
\newpage
\tableofcontents

 \thispagestyle{empty}

\clearpage
\setcounter{page}{1}

\section{Introduction}

Decision tasks constitute the core of computational complexity theory, and their role in theoretical computer science is crucial.
In his 2010 talk~\cite{Fraigniaud10}, Fraigniaud suggested distributed decision tasks as a base for a complexity theory for distributed computing. 
Following this, Fraigniaud, Korman, and Peleg~\cite{FraigniaudKP13} started the systematic study of distributed decision tasks, a project that continues until today.
They defined several classes of graph languages, where the basic one is \LD, consisting of all languages that can be decided by the following procedure.
The input is a labeled graph, representing a communication network; the nodes of the graph communicate for a constant time, and then a Turing machine at each node performs a time-unbounded local computation and decides if to accept or reject. 
An input graph is accepted if all the Turing machines at its nodes accept.
A simple example is the language of properly $k$-colored graphs: each node is labeled by a color in $\{1,\ldots,k\}$, and in the decision algorithm accepts if its color is different from those of its neighbors. 
It is easy to see that if the graph is properly $k$-colored then all nodes accept, while otherwise at least two of them reject.

The same work also defines the class \NLD, where each node is equipped with a \emph{certificate} of unbounded size, and a configuration is in the language if there exists a certificate assignment that makes all the nodes accept.
This latter notion is very similar to the notion of \emph{proof labeling schemes}, introduced by Korman, Kutten, and Peleg~\cite{KormanKP10} a few years beforehand, and coincides with it for many languages.
An example of a certification scheme (valid both for \NLD and as a Proof Labeling Scheme) is for $k$-colorability, the claim that the (initially unlabeled) graph has a proper $k$-coloring.
To certify this, each node gets a color in $\{1,\ldots,k\}$ as a certificate, and accepts iff its certificate is different from those of its neighbors.

Later on, Balliu, D'Angelo, Fraigniaud, and Olivetti~\cite{BalliuDFO18} introduced a hierarchy for distributed decision inspired by the polynomial hierarchy in classic complexity theory. 
This hierarchy captured the classes already known and allowed to define new classes of distributed decision tasks.
They studied the relation between these new classes, and in particular they showed that their hierarchy collapses to the second level: $\Pi_2^{\local}$ contains all decidable distributed languages.

An interesting artifact of their definition is that any language that is stable under the so-called lift operation can be certified using $O(n^2)$-bit certificates.
This is a very large class of language,
and contains, e.g., the language of all non-$3$-colorable graphs.
To certify such a property, each node is simply given a map of all the graph, and verifies locally that it is consistent with its view and that the given graph has the property (e.g., is not $3$-colorable).
This is allowed since the local computation at each node is unbounded and not accounted for, as is common in distributed graph algorithms.

Recently, Aldema Tshuva and Oshman~\cite{TshuvaO23} considered the case where local computation is bounded, limited to polynomial in the size of the graph. 
Shortly after, Reiter~\cite{Reiter24} suggested a similar model, but where the local computation is polynomial in the view of the node, that is, in the size of the neighborhood rather than the full graph.
Both these works defined hierarchies that resemble the centralized polynomial hierarchy, and studied relations between the classes in these hierarchies.
Aldema Tshuva and Oshman showed that their hierarchy coincides with the centralized hierarchy (restricted to graphs) starting from the second level. 
Reiter proved that his hierarchy coincides with the centralized one for graphs of a single node.
A common theme of these works is that they essentially resolve the relations from level 2 of the hierarchies and above, while leaving intriguing open questions regarding the relations in the lower levels of the hierarchies, and between them.

\subsection{Our Results}
\label{sec:our-results-intro}

The central goal of our work is to understand the relations between the distributed complexity classes defined in the aforementioned works. 
While much of this comparison has been carried out for the higher levels of the hierarchy by Aldema Tshuva and Oshman~\cite{TshuvaO23} and  Reiter~\cite{Reiter24}, several gaps remain in the lower hierarchy levels. 
Our main results are depicted in Figure~\ref{fig:diagram}.

\begin{figure}[tb]
    \centering
    \begin{tikzpicture}[
  every node/.style={align=center},
  solid/.style={->, thick}
]

\node (lpstr) at (3,0.5) {$\LPstar =\PLD$\\{\footnotesize Lem.~\ref{lem:lpeqpld}}};
\node (lp) at (5,0) {$\LP$};
\node (pionelp) at (-0.5,2) {$\Pi_1^{\LP}$};
\node (pioneplocinter) at (4.4,3.0) {$\PiOnePLocal\cap \SigmaOnePLocal$};
\node (nlp) at (7,2) {$\SigmaOneLP$};
\node (ld) at (2.2,3) {$\LD$};
\node (pioneploc) at (-0.5,5) {$\PiOnePLocal$};
\node (pioneplocn) at (-2.2,7) {$\Pi_1^{\text{\Plocal}[n]}$};
\node (npld) at (7,5) {$\SigmaOnePLocal$};
\node (pioneloc) at (0.5,7.7) {$\PiOneLocal$};
\node (nld) at (6,7.7) {$\SigmaOneLocal$};

\draw[solid] (lp) -- (lpstr);
\draw[solid] (lpstr) -- (pioneplocinter) node [midway, fill=white] {\footnotesize ~~Cor.~\ref{cor:PLD in intersection}\\\footnotesize ~~\cite{TshuvaO23}};
\draw[solid] (lp) -- (nlp) node [midway, fill=white] {\footnotesize \cite{Reiter24}};
\draw[solid] (pionelp) -- (ld)  node [midway, fill=white] {\footnotesize Lem.~\ref{lem:pi<ld}};
\draw[solid] (pionelp) -- (pioneploc) node [midway, fill=white] {\footnotesize Lem.~\ref{lem:pilp<pioneploc}} ;
\draw[solid] (pioneplocinter) -- (pioneploc);
\draw[solid] (ld) -- (pioneloc) node [midway, fill=white] {\footnotesize \cite{BalliuDFO18}};
\draw[solid] (nlp) -- (npld) node [midway, fill=white] {\footnotesize \cite{Reiter24}};
\draw[solid] (pioneplocinter) -- (npld);
\draw[solid] (npld) -- (nld) node [midway, fill=white] {\footnotesize Cor.~\ref{cor:PHvsLD}};
\draw[solid] (lpstr) -- (ld) node [midway, fill=white] {\footnotesize Lem.~\ref{lem:PLDsneLD}};
\draw[dashed, thick] (pioneploc) -- (pioneloc) node [midway, fill=white] {\footnotesize Lem.~\ref{lem:pip<pi}};

\draw[solid] (pioneploc) -- (pioneplocn) node [midway, fill=white] {\footnotesize Lem.~\ref{lem:piloc<pilocn}};

\draw[dashed,thick] (pioneloc) -- (pioneplocn) node [midway, above] {\footnotesize Cor.~\ref{cor:pioneplocn<pioneloc}};

\draw[solid] (ld) -- (nld) node [midway, fill=white] {\footnotesize \cite{BalliuDFO18}};

\end{tikzpicture}
    \caption{We consider the classes with universal quantifiers ($\Pi_1$-type, left), deterministic computation (center) and
existential quantifiers ($\Sigma_1$-type, right). 
Solid arrows represent strict containment, except for $\LP\subseteq\LPstar$ where the strictness remains open. 
The dotted line refers to the fact that the classes are incomparable. Only the known relations that are relevant to the paper are represented on the figure.}
    \label{fig:diagram}
\end{figure}

Intuitively, one would expect the following relation between classes of decision tasks, both regarding computation time and certificate sizes. 
\begin{center}
\begin{tcolorbox}
\[
\begin{aligned}
  \begin{matrix}
    \text{polynomial in} \\[2pt]
    \text{local view}
  \end{matrix}
  &\;\subsetneqq\;&
  \begin{matrix}
    \text{polynomial in} \\[2pt]
    \text{graph size}
  \end{matrix}
  &\;\subsetneqq\;&
  \text{unbounded}
\end{aligned}
\]
\end{tcolorbox}
\end{center}
Surprisingly, this turns out not to always be the case. 
While this is true for the existential quantifier (classes of type $\Sigma_1$ defined with ``$\exists$ certificate $c$''),
this is not the case without certificates, or with the universal quantifier (class of type $\Pi_1$ defined with ``$\forall$ certificate~$c$'').

We restrict our attention to constant verification radius, do not consider randomization, and focus on ID-oblivious certification, that is, the certificate assignment is independent of the IDs. 

\subsubsection{Without certificates: Equality of LP* and PLD} 

\label{sec:lp-pld-intro}
The first, and perhaps most fundamental comparison we present is between certificate-free classes: the class \LPstar, which is closely related to the class \LP
defined by Reiter~\cite{Reiter24},
and the class \PLD defined by Aldema Tshuva and Oshman~\cite{TshuvaO23}.
In both \LP and \LPstar, a node must decide by performing computation in time polynomial in the size of its local view, whereas in \PLD, decisions must be taken in time polynomial in the graph size.
The subtle difference between \LP and \LPstar is that
an \LPstar algorithm only needs to be correct for globally unique IDs, while an \LP algorithm must be correct for any locally unique ID assignment.
Therefore, an \LP algorithm is also an \LPstar algorithm, i.e., $\LP \subseteq \LPstar$.
All these three classes are contained in \LD, where the local computation is unbounded.

The containment  $\LPstar \subseteq \PLD$ is almost trivial, and one is tempted to speculate that proving $\LPstar \subsetneqq \PLD$ is just as simple. 
After all, in a low-degree graph \LPstar allows much less local computation time than \PLD.
However, in Section~\ref{sec:lp-pld}, we prove this speculation to be wrong.
\begin{center}
\begin{tcolorbox}
\[
\begin{aligned}
    \LP
    &\;\subseteq\;&
    \LPstar
  &\;=\;&
    \PLD
  &\;\subsetneqq\;&
  \LD
\end{aligned}
\]
\end{tcolorbox}
\end{center}

Roughly speaking, we show that a \PLD algorithm cannot fully utilize its supposed global polynomial-time bound, since it lacks knowledge of the total graph size.

Continuing this section, we show that this simple idea is instrumental in proving separations between distributed decision classes.
One result asserts that  $\PLD \subsetneqq \classP \cap \LD$, which also implies $\PLD \subsetneqq \LD$.
That is, polynomial time local decision is strictly weaker than centralized polynomial-time decision (\classP) and local decision with unbounded local time (\LD).
This was proved in~\cite{TshuvaO23} using a carefully crafted language called \small{ITER-BOUND}, and applying a computability argument;
here, we give a much simpler proof, based on the fact that the local polynomial time cannot be fully utilized.

\subsubsection{Existential Certificates}

In the case of existential certificates, the classes $\SigmaOneLP(\NLP),\SigmaOnePLocal(\NPLD)$ and $\SigmaOneLocal(\NLD)$ behave ``as expected''.
\begin{center}
\begin{tcolorbox}
\[
\begin{aligned}
    \Sigma_1^{\text{\textsf{LP}}} (\NLP)
  &\;\subsetneqq\;&
    \Sigma_1^{\text{\textsf{P-local}}}\;(\NPLD)
  &\;\subsetneqq\;&
  \Sigma_1^{\text{\textsf{local}}} (\NLD)
\end{aligned}
\]
\end{tcolorbox}\end{center}

The containments in both cases follow easily from the definitions since an existential certificate from a stricter space is still an existential certificate when we are allowed larger certificates. 
Note that this simple argument does not work when we consider the corresponding $\Pi$ classes below: in the case of $\Pi$ classes we need to ensure that the algorithm accepts for all certificates in a larger set of possible certificates.
Intuitively speaking, for $\Sigma$ we can ``use the same certificate'', while for $\Pi$ we ``need to accept on more certificates''.

The inequality $\NLP \neq \NPLD$ was proved in \cite{Reiter24} using the language {\footnotesize\texttt{NOT-ALL-SELECTED}}, which is the language of graphs where the nodes are given labels $0$ or $1$ and at least one of the nodes is given the label $0$.
The inequality $\NPLD \neq \NLD$ is a simple consequence of Corollary~\ref{cor:PHvsLD} we prove at the end of Section~\ref{sec:perlim}.
There, we define the language $\node_{\Lsp}$ which is in $\LP$ but not in the the centralized polynomial hierarchy $\PH$, while $\NPLD$ is clearly a subset of $\PH$.

\subsubsection{Universal Certificates: Certificate-Bounded Separations}
\label{sec:certificates-intro}

Up until here, we showed that for local computation, time that is polynomial in the graph size is not more powerful than having it polynomial only in the local view. 
For existential certificates, we showed that certificates with size polynomial in the graph size are slightly more powerful, and lies between certificates with size bounded by the local view size, and unbounded certificates.
In Section~\ref{sec:poly-certificates} we move to consider universal certificates.

In Section~\ref{subsec:bounded-leak} we present a result of the opposite flavor:
In some cases, 
bounding the certificates by the graph size lets us solve problems which would not be solvable with unbounded certificates. 
Concretely, $\PiOnePLocal
  \nsubseteq
  \PiOneLocal$.
To prove this separation,
we use the language \AGTG of $n$-node paths with each node $i$ labeled $x_i$, such that $x_i>n$ for all $i$.
There is a simple 
$\PiOnePLocal$ algorithm for it, when bounding the certificate sizes by the polynomial $Q(n)=n$: node $i$ checks that $x_i$ is larger than the size of its certificate.
As another application of the \AGTG language, we prove that  $\PLD \subsetneqq \PiOnePLocal \cap \NPLD$.
This was claimed in~\cite{TshuvaO23}, and here we provide a simple, locality-based proof of it.

To prove the aforementioned separation,
we use the fact that the nodes can deduce information on the value of $n$.
In their work~\cite{TshuvaO23}, Aldema Tshuva and Oshman defined the class $\Pi_1^{\text{\Plocal}[n]}$, which is the same as the class $\PiOnePLocal$, but with the nodes knowing $n$.
Since \AGTG used to separate 
$\PiOnePLocal$ and $\PiOneLocal$ is using information on $n$ learned by the nodes in a $\PiOnePLocal$ algorithm, one may suspect that the nodes can learn $n$, yielding
$\Pi_1^{\text{\textsf{P-local}}[n]}=\PiOnePLocal$.
In Section~\ref{subsec:knowing n} we disprove this suspicion, and show
$\PiOnePLocal\subsetneqq\Pi_1^{\text{\Plocal}[n]}$.
This, on the other hand, begs the question whether knowing $n$ makes the class $\Pi_1^{\text{\Plocal}[n]}$  as strong as $\PiOneLocal$, which we also answer in the negative: we show that 
$\Pi_1^{\text{\Plocal}[n]}$ and $\PiOneLocal$ are incomparable.

In Section~\ref{subsec:LPcertificates} we consider the ``weaker'' class $\PiOneLP$, where the certificate sizes are bounded by the local view size.
We complement the aforementioned inequality $\PiOnePLocal
  \nsubseteq
\PiOneLocal$
by showing that 
$\PiOneLP$
is strictly contained in both classes.
The containment claims were claimed by Reiter~\cite{Reiter24} without a proof, while the separation was left open in his work.
For both cases, we complement Reiter's containment claims with formal proofs, and prove the corresponding  separation results.

\begin{center}
\begin{tcolorbox}
\[
\begin{aligned}
    \PiOneLP
  &\;\subsetneqq\;&
    \PiOnePLocal
  &\;\nsubseteq\;&
  \PiOneLocal
\end{aligned}
\]
\end{tcolorbox}
\end{center}

\begin{center}
\begin{tcolorbox}
\[
\begin{aligned}
    \PiOneLP
  &\hspace{3ex}&
   \subsetneqq
  &\hspace{3ex}&
  \PiOneLocal
\end{aligned}
\]
\end{tcolorbox}
\end{center}

In fact, in order to prove $\PiOneLP \subsetneqq \PiOneLocal$, we prove a more refined result:
$\PiOneLP \subsetneqq \LD$. 
To this end, we show that a $\PiOneLP$ algorithm can be simulated by an \LD algorithm of double the locality radius.
It is unclear whether a simulation with the same locality is possible.
A similar argument is used to simulate a $\PiOneLP$ algorithm by a 
$\PiOnePLocal$ algorithm.

\section{Model and Definitions}
\label{sec:perlim}

Let us define the model formally. We follow closely the definitions and notations of~\cite{TshuvaO23}.

\subsection{General Definitions}
We consider simple graphs on $n$ nodes.
In some cases the nodes will have input labels, for example a color for each node. 
A \emph{graph configuration} $(G,x)$ is a graph $G$, equipped with an assignment of inputs, $x:V \to S$ (where $S$ is some input set).
We assume that all the inputs are bounded by some polynomial in the size of the graph.
We use $N_G^t(v)$ to denote the distance-$t$ neighborhood of $v$, and $\#N_G^t(v)$ the number of nodes in it.
We also use $|N^{t}_{G,x,id}(v)|$ for the number of bits in $v$'s distance-$t$ neighborhood, where we take into account the inputs, IDs and certificates given to the nodes (see definitions below).
 
\subsubsection{Identifiers}
\label{sec:ids}
The distributed certification literature contains several models for identifiers.
Unless stated otherwise, 
we use the classic version where  
every node is given a unique integer encoded on $O(\log n)$ bits. 
For most classes, that is \LD and \PLD, and the ones with superscripts \local{} and \Plocal{},
the certificates should not depend on the identifiers, in the sense that the identifiers are chosen adversarially after the certificates are chosen. 
See~\cite{Feuilloley21, FeuilloleyF16} for discussions of other models, and~\cite{FeuilloleyFH21} for  another convention.

The identifiers in the \LP{} class and the ones with \LP{} superscript behave differently. 
As computation time in these classes depends only on the local view, the identifier's uniqueness need not be universal: when considering a $t$-local algorithm, the identifiers in every distance-$t$ ball around a node must be distinct, but not in the whole graph.

\subsubsection{Languages, Neighborhoods, Local Decision Algorithms}
In local decision, it is common to use the vocabulary from language theory, and we define a \emph{distributed language} as a set of graph configurations. 
In the following, we might simply refer to a \emph{language}. 
Note that the identifiers do not appear in the definition of a distributed language. An example of a language is the set of all the graphs for which the inputs encode a proper coloring.

Let $N^t_{G,x,id}(v)$ denote the $t$-neighborhood of $v$ in the identified configuration $(G, x, id)$,
including the identifiers and the inputs of the nodes in the $t$-neighborhood. 
Let $N^t_{G,x}(v)$ be the same but without the identifiers.

Next, we formally define local distributed algorithms. 
In a distributed algorithm, each
node observes the neighborhood around itself and then decides whether to accept or reject.

\begin{definition}[Local decision algorithms]
A $t$-local decision algorithm $\genericalgo$ is a computable
mapping from identified neighborhoods of size $t$ to Boolean output values, \emph{accept} or \emph{reject}.
When $t$ is constant, we refer to such an algorithm as an \emph{\LD-algorithm}.

If $\genericalgo(N^{t}_{G,x,id}(v))$ accepts 
at all nodes $v \in  V (G)$, then we say that $\genericalgo$ \emph{accepts} $(G, x, id)$.
We say that $\genericalgo$ \emph{decides} the distributed language $\genericlang$ if for every graph configuration $(G, x)$ and for
every identifier assignment $id$,
$(G, x) \in  \genericlang \Leftrightarrow \genericalgo(G, x, id) = $ \emph{accept}.
Given a $t$-local decision algorithm, we refer to~$t$ as the algorithm’s \emph{locality radius}.
\end{definition}

\subsubsection{Deterministic Classes}

We first define the deterministic class without local computational constraints.

\begin{definition}[The class \LD]
A distributed language
$\genericlang$ is in the class \LD if it can be decided by a $t$-local decision
algorithm $\genericalgo$ for some constant $t$. 
\end{definition}

\begin{definition}[The class \PLD]
A distributed language
$\genericlang$ is in the class \PLD if there exist a polynomial $P$ such that for all $n>0$, $\genericlang$ can be decided on any $n$-node graph by an \LD-algorithm running in $P(n)$ time.
\end{definition}

\begin{definition}[The class \LP]
A distributed language
$\genericlang$ is in the class \LP if there exist a constant~$t$ and a polynomial $P$ such that $\genericlang$ can be decided by an \LD-algorithm with locality $t$ running in
$P(|N^{t}_{G,x,id}(v)|)$ time on each node $v$ of a graph $G$, where the IDs are $t$-locally unique.
\end{definition}

\begin{definition}[The class \LPstar]
A distributed language
$\genericlang$ is in the class \LPstar if there exist a constant~$t$ and a polynomial $P$ such that $\genericlang$ can be decided by an \LD-algorithm with locality $t$ running in
$P(|N^{t}_{G,x,id}(v)|)$ time on each node $v$ of a graph $G$, where the IDs are globally unique.
\end{definition}

With these definitions, we have $\LP \subseteq \LPstar.$

\subsubsection{Existential Quantifier: $\Sigma_1$-Type Classes}

\begin{definition}[The class $\NLD=\SigmaOneLocal$]
    A distributed language $\genericlang$ is
in the class $\NLD=\SigmaOneLocal$ if there exists an $\LD$-algorithm $\genericalgo$ such that for every configuration $(G, x)$:
\[(G, x) \in \genericlang \Rightarrow \exists c, \forall id, (\genericalgo\text{ accepts }(G, (x, c), id) ) ,\]
\[(G, x) \notin \genericlang \Rightarrow \forall c, \forall id, (\genericalgo\text{ rejects }(G, (x, c), id) ) .\]
\end{definition}

Note that the order of quantifier encodes what we discussed earlier: the identifiers are chosen adversarially after the certificates.

\begin{definition}[The class $\NPLD=\SigmaOnePLocal$]
    A distributed language $\genericlang$ is
in the class $\NPLD=\SigmaOnePLocal$ if there exist an $\LD$-algorithm $\genericalgo$ and polynomials $P$ and $Q$ such that for every $n$-node configuration~$(G, x)$:
\[(G, x) \in \genericlang \Rightarrow \exists c,  
\forall id, (\genericalgo\text{ accepts }(G, (x, c), id) ), \]
\[(G, x) \notin \genericlang \Rightarrow \forall c,  
\forall id, (\genericalgo\text{ rejects }(G, (x, c), id) ). \]
where in addition, \genericalgo must run in $P(n)$ time at each node, and $c$ must satisfy $|c(v)|\leq Q(n)$ for every node $v$.
\end{definition}
 
Note that the polynomials $P$ and $Q$ are chosen before the graph configuration, which is crucial in our proof of Lemma~\ref{lem:pip<pi}.
Also note that the nodes generally do not know $n$, and therefore cannot evaluate~$P(n)$, yet the algorithm $\genericalgo$ must be designed so that it runs in $P(n)$ time.

\begin{definition}[The class $\NLP=\SigmaOneLP$]
    A distributed language $\genericlang$ is
in the class $\NLP=\SigmaOneLP$ if there exist $t \in \mathbb{N}$, a $t$-round $\LD$-algorithm and polynomials $P$ and $Q$ such that for every $n$-node configuration~$(G, x)$ and every $t$-locally unique ID assignment:
\[(G, x) \in \genericlang \Rightarrow \exists c\leq Q, 
(\genericalgo\text{ accepts }(G, (x, c), id) ), \]
\[(G, x) \notin \genericlang \Rightarrow \forall c\leq Q, 
(\genericalgo\text{ rejects }(G, (x, c), id) ). \]
where in addition,
\genericalgo must run in $P (|N^{t}_{G,x,id}(v)|)$ time at each node $v$, 
and $\forall c\leq Q$ stands for $|c(v)|\leq Q (|N^{t}_{G,x,id}(v)|)$ for each~$v\in V$.
\end{definition} 

\subsubsection{Universal Quantifier: $\Pi_1$-Type Classes}

\begin{definition}[The class $\PiOneLocal$] 
A distributed language $\genericlang$ is
in the class $\PiOneLocal$ if there exists an \LD-algorithm $\genericalgo$ such that for every configuration $(G, x)$:
\[
(G, x) \in \genericlang \Rightarrow \forall c, \forall id, (\genericalgo\text{ accepts }(G, (x, c), id) ),\]
\[
(G, x) \notin \genericlang \Rightarrow \exists c, \forall id, (\genericalgo\text{ rejects }(G, (x, c), id) ).\]
\end{definition}

\begin{definition}[The class $\PiOnePLocal$] 
A distributed language $\genericlang$ is
in the class $\PiOnePLocal$ if there exists an \LD-algorithm $\genericalgo$ and polynomials $P$ and $Q$ such that for every $n$-node configuration~$(G, x)$: 
\[
(G, x) \in \genericlang \Rightarrow \forall c\leq Q, \forall id, (\genericalgo\text{ accepts }(G, (x, c), id) ),\]
\[
(G, x) \notin \genericlang \Rightarrow \exists c\leq Q, \forall id, (\genericalgo\text{ rejects }(G, (x, c), id) ).\]
where in addition, \genericalgo must run in $P(n)$ time at each node, and $\forall c\leq Q$ is a shorthand for $\forall c\in\{c:V\to\{0,1\}^\ast\mid\forall v\in V,\; |c(v)|\leq Q(n)\}$.
\end{definition}

\begin{definition}[The class $\PiOneLP$] 
A distributed language $\genericlang$ is
in the class $\PiOneLP$ if there exists $t \in \mathbb{N}$, a $t$-round \LD-algorithm $\genericalgo$ and polynomials $P$ and $Q$ such that for every $n$-node configuration~$(G, x)$ and every $t$-locally unique ID assignment: 
\[
(G, x) \in \genericlang \Rightarrow  \forall c\leq Q,(\genericalgo\text{ accepts }(G, (x, c), id) ),\]
\[
(G, x) \notin \genericlang \Rightarrow  \exists c\leq Q, (\genericalgo\text{ rejects }(G, (x, c), id) ).\]
where in addition, 
\genericalgo must run in $P (|N^{t}_{G,x,id}(v)|)$ time at each node $v$, 
and $\forall c\leq Q$ stands for 
$\forall c\in\{c:V\to\{0,1\}^\ast\mid\forall v\in V,\; |c(v)|\leq Q(|N^{t}_{G,x,id}(v)|)\}$.
\end{definition}

Note that there are two ways to define a bound on the certificate size after a universal quantifier: either we enumerate all the certificates \emph{of the given size}, or all the certificates \emph{up to a given size}. This is an important distinction since the first setting leaks more information about the size of the graph. We use the second definition.

In our definitions, we separately write the conditions for $(G,x) \in \genericlang$ and $(G,x) \notin \genericlang$ instead of using a single if-and-only-is statement, since we want the algorithm to either accept for all ID assignment, or to reject for all of them (see also~\cite[Sec.~1.1]{BalliuDFO18}).

\subsection{A Simple Observation}
\label{sec:lsp_defn}

Note that a consequence of the definitions is that $\PiOneLP,\SigmaOneLP \in \PH$ and $\PiOnePLocal,\SigmaOnePLocal \in \PH$ (remember that $\PH$ stands for the polynomial hierarchy in centralized complexity theory).
Now, we consider a language $\Lsp \in \textsf{EXPSPACE} \setminus \textsf{PSPACE},$ which exists by the space hierarchy theorem.
We construct a language $\node_{\Lsp}$ using $\Lsp$ as follows.
\[\node_{\Lsp}= \{(G,x) \mid G \text{ contains a single node labeled $x$, and } x \in \Lsp \} \]

It is easily seen that $\node_{\Lsp}\in \LD$ since a single node is allowed to use unbounded resources for its local computation.
On the other hand, $\node_{\Lsp} \notin \PH, $ since such an algorithm could be simulated by a centralized machine to compute $\Lsp$ in polynomial space (remember that $\PH \subseteq \textsf{PSPACE}$).
This yields the following corollary.
\begin{corollary}
\label{cor:PHvsLD}
    \[\textnormal{\node}_{\Lsp} \in\LD\setminus\big( \PiOneLP \cup \SigmaOneLP \cup \PiOnePLocal \cup \SigmaOnePLocal\big).\]
\end{corollary}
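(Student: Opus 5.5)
The corollary has two parts: membership in \LD, and non-membership in each of the four bounded classes. The plan is to treat them separately and then, for the second part, reduce each of the four bounded classes to a centralized class to which the space hierarchy theorem applies.

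\textbf{Membership in \LD.} Take a $0$-local algorithm in which each node does the following:
\begin{itemize}
\item it checks that it has no neighbors;
\item it then runs a decider for $\Lsp$ on its input $x$, which exists since $\Lsp\in\textsf{EXPSPACE}$ is decidable;
\item it accepts iff both checks succeed.
\end{itemize}
If $G$ has a single node, the output is correct. If $G$ has at least two nodes, then $G$ is either connected, so some node has a neighbor and rejects, or disconnected.

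The disconnected case is a subtlety. Either the graphs under consideration are connected, as is standard, or I would let each node check that its $1$-ball is exactly itself, which handles the multi-isolated-node case only if connectivity is assumed. I will simply adopt the standard connectedness assumption. Locality is constant and computation is unbounded, as \LD permits.

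\textbf{Non-membership.} Suppose $\node_{\Lsp}$ lies in one of the four classes. I will show that $\Lsp\in\PH$, which gives $\Lsp\in\textsf{PSPACE}$, a contradiction. Fix the algorithm $\genericalgo$, the locality $t$, and the polynomials $P,Q$.

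On a single-node configuration with input $x$:
\begin{itemize}
\item $n=1$, and the local view has size $\poly(|x|)$ once we fix the ID to a constant such as $1$, which is legal in every model.
\item All running times and certificate bounds are therefore polynomial in $|x|$.
\item For a $\Sigma_1$ class, $x\in\Lsp$ iff there exists a certificate $c$ of polynomial length such that $\genericalgo$ accepts within polynomial time. This is an \classNP{} predicate.
\item For a $\Pi_1$ class, $x\in\Lsp$ iff for all certificates $c$ up to the bound, $\genericalgo$ accepts. This is a co\classNP{} predicate.
\end{itemize}

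Two points in this step need care:
\begin{itemize}
\item The definitions require the answer to be the same for all IDs, so fixing one ID is sound.
\item For \Plocal{} classes the bounds are $P(1)$ and $Q(1)$, which are constants. However, the running time must at least allow reading the input, so we instead bound the simulation by the polynomial in the input size implicit in the model. In fact, a constant bound only makes the centralized simulation easier: the machine simply runs for that many steps.
\end{itemize}
Hence $\Lsp\in\classNP\cup\textrm{co}\classNP\subseteq\PH\subseteq\textsf{PSPACE}$, which contradicts the choice of $\Lsp$.

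\textbf{Main obstacle.} The main difficulty is making sure the centralized simulation is faithful, specifically that the certificate quantifier and the time bound remain polynomial in $|x|$ in every one of the four models. This amounts to checking each definition's bound at $n=1$, which I expect to be routine.
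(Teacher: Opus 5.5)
Your proposal is correct and is essentially the paper's argument: membership in \LD{} follows from unbounded local computation, and non-membership holds because a centralized machine can simulate the bounded algorithm on the one-node instance, placing $\Lsp$ in $\classNP$ or co-$\classNP$, hence in $\PH\subseteq\textsf{PSPACE}$. You are in fact more careful than the paper about rejecting larger connected graphs (this needs radius $1$, not the $0$-local algorithm you first wrote) and about the constant bounds $P(1)$, $Q(1)$ in the \Plocal{} classes.
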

We notice that the corollary also holds if the nodes know $n$, 
since a centralized machine simulating the local algorithm knows the size of the graph.

\section{Polynomial in the Graph vs.\ in the Local View}
It can be easily seen that $\LPstar \subseteq \PLD.$ 
A node in an $\LPstar$ algorithm, by definition, is only allowed to use time polynomial in the size of its neighborhood. A node in a $\PLD$ algorithm, on the other hand, is permitted to use time polynomial in the size of the graph. The following result shows that these two notions overlap exactly. A $\PLD$ algorithm cannot see the whole graph and therefore cannot exploit the time permitted by its definition.
\label{sec:lp-pld}

    \begin{lemma}
    \label{lem:lpeqpld}
        $\LPstar = \PLD$.
    \end{lemma}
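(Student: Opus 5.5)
The plan is to show the nontrivial inclusion $\PLD \subseteq \LPstar$, since $\LPstar \subseteq \PLD$ is immediate: the bit-size of a $t$-neighborhood is polynomial in $n$, given that inputs are polynomially bounded and IDs take $O(\log n)$ bits. The key idea is that the \emph{same} algorithm already witnesses membership in $\LPstar$, with essentially the same polynomial. Fix a language $\genericlang \in \PLD$, decided by a $t$-local algorithm $\genericalgo$ that runs in $P(n)$ time on every $n$-node configuration with globally unique IDs. Correctness of $\genericalgo$ as an $\LPstar$ algorithm is inherited verbatim, since both classes quantify over globally unique IDs. So it only remains to bound the running time at each node $v$ by a polynomial in $|N^{t}_{G,x,id}(v)|$.

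The main step is an indistinguishability argument. Fix $(G,x,id)$ and a node $v$, and let $B=N^{t}_{G,x,id}(v)$ be its identified view, with $m=|B|$ bits. $\genericalgo$ is a deterministic function of $B$ alone, so its running time at $v$ depends only on $B$. I will construct an auxiliary configuration $H$ with at most $\poly(m)$ nodes in which some node has view exactly $B$. Running $\genericalgo$ on $H$, the $\PLD$ guarantee bounds the time at that node by $P(|V(H)|)=\poly(m)$, hence also at $v$ in $G$.

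To build $H$, take the subgraph of $G$ that $v$ sees, i.e.\ the ball of radius $t$ around $v$ with its inputs and IDs. Distances from $v$ are preserved within this subgraph, so the radius-$t$ view of $v$ in it is exactly $B$. Its IDs are distinct because they were globally unique in $G$. The one subtlety, which I expect to be the main obstacle, is that $\PLD$ only promises the bound $P(n)$ on legitimate configurations. Inputs there must be polynomially bounded in the number of nodes and IDs must be $O(\log n)$ bits. The ball itself might have very few nodes but huge labels (bounded only in terms of the original $n$) or large IDs. To fix this I will pad $H$ with a path of $\Theta(m^{c})$ fresh nodes, for a suitable constant $c$, carrying fresh distinct IDs and trivial inputs. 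The path is attached to a node at distance exactly $t$ from $v$ (or to $v$ if the ball has radius less than $t$, in which case the ball is the whole component and I can add a separate component or attach at distance $>t$ appropriately). The new nodes are then at distance $>t$ from $v$ and do not alter $B$. The padded $H$ has $n_H=\poly(m)$ nodes, every label and ID in $B$ has size at most $m \le$ a polynomial in $n_H$, and the IDs fit the $O(\log n_H)$ format after choosing $c$ large enough relative to the constant in that format.

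Applying the $\PLD$ time bound to $H$ then gives running time at most $P(n_H)=P'(m)$ at $v$, for a polynomial $P'$ depending only on $P$, $c$, and the ID-length constant. Hence $\genericalgo$ runs in $P'(|N^{t}_{G,x,id}(v)|)$ time at every node, which places $\genericlang$ in $\LPstar$. I would double-check the ID-size convention carefully in the write-up, since that is where the padding parameter $c$ is determined.
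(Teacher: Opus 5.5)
\paragraph{Verdict: the ID-padding step fails.}
Your route is the paper's route: transfer the $\PLD$ time bound from a small configuration that contains the identified view $B$. The step you flagged as the main obstacle, however, fails as you resolve it. Under the convention you adopt, an ID is legal in an $n_H$-node configuration only if it has at most $C\log n_H$ bits. So an $\ell$-bit ID needs $n_H\ge 2^{\ell/C}$. The test ``size at most $m$, hence polynomial in $n_H$'' is the right one for inputs, not for IDs.

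The ID length $\ell$ can be $\Theta(m)$. Take an endpoint $v$ of a long path with empty inputs, give $v$ an ID close to $n^C$, and give small IDs to the rest of its ball. Then $m=\Theta(\log n)$, while legality of $v$'s own ID already forces $n_H\ge n=2^{\Theta(m)}$. Padding to $\Theta(m^c)$ nodes only licenses $O(c\log m)$-bit IDs, so your sentence ``the IDs fit the $O(\log n_H)$ format after choosing $c$ large enough'' is false for every constant $c$.

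This is not just bookkeeping. Under this convention, the claim that the \emph{same} algorithm is an $\LPstar$ algorithm is false. Consider the algorithm that idles for $\lfloor X^{1/C}\rfloor$ steps, where $X$ is the node's own ID, and then accepts. It runs in $O(n)$ time on every legal $n$-node configuration, since $X< n^C$. Yet at the endpoint above it runs for about $n=2^{\Theta(m)}$ steps. The magnitude of an ID leaks a lower bound on $n$ that is exponential in the view size. Under the strict $O(\log n)$-bit convention you would therefore have to change the algorithm, not only re-analyze it.

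\paragraph{How the paper's proof handles this.}
The paper pads only to $n'=\#N^r[u]+\lambda_r$ nodes. It applies the $\PLD$ bound to $G'$ with the original identifiers and does not re-check their format. In effect it treats IDs like inputs, whose length need only be polynomially bounded in the number of nodes. Its easy direction is phrased the same way: ``the local input and ID of each node is polynomial in $n$.''

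Under that reading, $n'\ge\lambda_r$ bounds the length of every label and ID in the view, and padding to roughly $m$ nodes suffices. Your argument then goes through and essentially coincides with the paper's. So you should state that convention explicitly rather than claim the $O(\log n)$-bit format can be met. Note that the idle example shows the paper's padding step relies on this same looser reading.

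\paragraph{A smaller fix: the whole-graph case.}
When the ball is all of $G$ (your radius-less-than-$t$ case), attaching the path to $v$ changes $B$. Adding a separate component is unnecessary, and illegal if configurations are connected. Instead, apply the bound to $G$ itself, using $n\le m$ and a non-decreasing polynomial upper bound on $P$, as the paper does.
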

        
    \begin{proof}
    It is easy to see that $\LPstar \subseteq \PLD$:
    The size of a node's neighborhood is at most $n$, and the local input and ID of each node is polynomial in $n$.
    Therefore, the running time of an $\LPstar$ algorithm at a node is also polynomial in $n$,
    and the same algorithm is a $\PLD$ algorithm.

    To show $\PLD \subseteq \LPstar$ we consider a $\PLD$ algorithm $\genericalgo$ of radius $r$ on a graph $G$.
    Intuitively, we want to claim that the algorithm does not know the graph's size, and thus can only use time polynomial in its view.
    The formalization, however, is a bit trickier.

    Let $P$ be the polynomial bounding the running time of $\genericalgo$ as a function of the number $n$ of nodes. 
    Assume w.l.o.g.\ that $P$ is non-decreasing for all $n\geq 0$;
    if this is not the case, $P$ can be replaced by a bigger polynomial as follows.
    Let $P(n)=a_kn^k+\ldots+a_1n+a_0$, with $a_k\neq 0$.
    Clearly, $a_k>0$, as otherwise the algorithm's running time would have to be negative for a large enough $n$.
    Let $c$ be the largest local maxima of $P$ on any $n\geq0$, which must exist since $P$ is not monotone.
    Then $\bar P(n)=a_kn^k+c$ is monotone on the desired range, and satisfies $P(n)\leq \bar P(n)$ for all $n\geq0$, hence also bounds the local running time of $\genericalgo$.
    Thus, we can henceforth assume $P$ is monotonously increasing on $n\geq0$.
    
    We pick an arbitrary node $u \in V(G)$, and that the local computation time of $\genericalgo$ in $u$ is polynomial in the size of $u$'s $r$-neighborhood and input,
    hence $\genericalgo$ is also a legitimate $\LPstar$ algorithm with the same polynomial $P$ bounding its running time (but this time, as a function of the view size and not of~$n$).
    Let $\lambda_r$ be the total size of the labels in $u$'s $r$-neighborhood, including IDs.
    If the $r$-neighborhood of $u$ contains all of $G$ then \genericalgo's running time is bounded by $P(n)$, while an \LPstar algorithm is allowed to run for time $P(n+\lambda_r)$.
    As $P$ is monotone, we are done.
    
    Otherwise, 
    recall that $\#N^r[u]$ denotes the number of nodes in $u$'s $r$-neighborhood.
    If $\#N^r[u]+\lambda_r\geq n$ then \genericalgo's running time is still under $P(\#N^r[u]+\lambda_r)$, and we are done as before.
    
    Assume this is not the case.
    Let $v$ be a node in the $(r+1)$-neighborhood of $u$ but not in its $r$-neighborhood.
    Consider the graph $G'$ which is the same as $G$ on the $r$ neighborhood of $u$ and on $v$, and extended by a path of label-less nodes from $v$, so that the total number of nodes is 
    $n'=\#N^r[u]+\lambda_r$.
    Now, the view of $u$ in both $G$ and $G'$ is the same, and therefore the running time of $\genericalgo$ on both graphs must be identical, and bounded by 
    $P(\#N^r[u]+\lambda_r)=P(n')$.
    Thus, $\genericalgo$ is an \LPstar algorithm, as claimed.
    \end{proof}

Using lemma \ref{lem:lpeqpld}, we can give a simpler proof of the fact that $\PLD \subsetneqq \classP \cap \LD.$ This was proved in~\cite{TshuvaO23} using more involved, ad hoc arguments.
\begin{lemma}\label{lem:PLDsneLD}
    $\PLD \subsetneqq \classP \cap \LD$.
\end{lemma}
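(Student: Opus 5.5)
The plan has two parts: the containment $\PLD \subseteq \classP \cap \LD$, and a separating language in $(\classP \cap \LD) \setminus \PLD$, where Lemma~\ref{lem:lpeqpld} lets us argue about \LPstar instead of \PLD.

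\textbf{Containment.} $\PLD \subseteq \LD$ holds by definition. For $\PLD \subseteq \classP$, a centralized machine gets the configuration $(G,x)$, assigns IDs $1,\ldots,n$, and simulates the \PLD algorithm at every node. Each simulation takes $P(n)$ time, and extracting each constant-radius view is polynomial, so the total time is polynomial. It then accepts iff all nodes accept. Correctness holds because a \PLD algorithm's verdict is independent of the ID assignment.

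\textbf{Separation.} The language should use the fact that in a graph with a big number of nodes, a node with a small view must still decide something that needs time polynomial in $n$ rather than in its view. Take $\mathcal{L}$ to be the set of paths (or any connected graphs) in which exactly one node carries a nonempty label $x$, all other nodes are unlabeled, and a $\Lsp$-like predicate holds on $(x,n)$. Concretely, pick a language $T \subseteq \{0,1\}^*$ decidable in time $2^{|x|}$ but not in polynomial time, by the time hierarchy theorem. Let $\mathcal{L}$ consist of paths on $n \ge 2^{|x|}$ nodes whose labeled endpoint has label $x \in T$, with all other nodes unlabeled and the labeled node being an endpoint.

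\begin{itemize}
\item $\mathcal{L}\in\LD$: each node checks the local path structure and the label format. The labeled endpoint accepts iff $x\in T$. The condition $n\ge 2^{|x|}$ is not locally checkable, so this version fails, and I would instead drop the size constraint from the language: $\mathcal{L}=\{$labeled paths whose unique labeled endpoint has $x\in T\}$. This is in \LD trivially.
\item $\mathcal{L}\in\classP$: the centralized machine has $n\ge 2^{|x|}$ available only if $x$ is short relative to $n$. So I would restrict to $|x|\le \log n$ by requiring the input encoding: $x$ is spread as one bit per node along the first $|x|$ nodes, and the path has length at least $2^{|x|}$. Centrally, this is decided in time $\poly(n)$ by running the $2^{|x|}$-time decider for $T$.
\item $\mathcal{L}\notin\LD$ under this encoding: the length condition is not local. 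So the bits must be checkable locally but the length must not be. I would resolve this with a ``promise-free'' trick. The language contains all labeled paths such that \emph{if} $n \ge 2^{|x|}$ then $x\in T$, and the short paths are accepted. This does not work in \LD because the end node cannot know $n$.
\end{itemize}

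The main obstacle is exactly this tension: membership in \LD needs a locally checkable condition, while the hardness must come from a computation that is polynomial in $n$ but superpolynomial in the view. The cleanest fix is to avoid global constraints altogether. Let $\mathcal{L}$ be single-labeled-node configurations $(G,x)$ with $x\in T$ and $G$ a path with $n\ge 2^{|x|}$ nodes. Membership in $\classP\cap\LD$ then uses the input-size assumption: inputs are bounded by a polynomial in $n$, so configurations with $n<2^{|x|}$ are excluded from the universe of valid configurations.

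Granting that, \LD and \classP membership are immediate. For $\mathcal{L}\notin\PLD=\LPstar$ we argue by contradiction. An \LPstar algorithm of radius $t$ with time bound $P$ runs at the labeled node in time $P(O(t+|x|+\log n))$. Here the IDs are within a view of $t$ nodes, but the ID length of $O(\log n)$ leaks $n$. To defeat that, we assign small IDs near the labeled node, using $O(\log t)$ bits, which is allowed since we only need globally unique IDs. The node's time is then $\poly(|x|)$. All other nodes see no information about $x$, so the labeled node alone must decide $x\in T$ in $\poly(|x|)$ time, for every $x$, by embedding $x$ in a path of length $2^{|x|}$. This contradicts $T\notin\classP$.
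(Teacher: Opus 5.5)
Your containment argument is correct and matches the paper's. The separation has a genuine gap: the language you finally settle on is not in \LD. That language consists of paths with a single labeled endpoint $x\in T$ and $n\ge 2^{|x|}$. Your justification misreads the model. The standing assumption is only that inputs have size \emph{polynomial} in $n$, i.e.\ $|x|\le\poly(n)$. It does not give $|x|\le\log n$, so configurations with, say, $|x|=n$ are legitimate no-instances that the algorithm must reject.

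It cannot reject them. Fix $x\in T$, a yes-instance path on $n_1\ge 2^{|x|}$ nodes, and a no-instance path on $n_2$ nodes with $2t+2\le n_2<2^{|x|}$, both labeled $x$ at an endpoint. Every radius-$t$ view of the short path, IDs included, also occurs in the long path under a suitable ID assignment. Hence every node of the short path accepts. This is exactly the tension you identified yourself. Excluding such configurations from the ``universe'' turns the language into a promise problem, which is not what \LD (or \classP) is defined to decide. Your intermediate variants fail for the same reason, as you noted.

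The missing idea is to write the graph size into the input, so that it becomes locally verifiable while each label stays logarithmic. The paper labels the path $(n,n),(n,n-1),\ldots,(n,2),(n,1,\varphi)$. Each node checks that its neighbors carry the same first coordinate and consecutive second coordinates. The endpoints check that their counters are $n$ and $1$. The endpoint holding $\varphi$ checks $|\varphi|\le\log n$ and $\varphi\in T$. This language is in \LD. It is also in \classP, since deciding $\varphi$ takes $2^{|\varphi|}\le n$ steps; here your choice $T\in\textsf{DTIME}(2^{|x|})\setminus\classP$ is in fact more careful than taking an arbitrary language in $\textsf{EXP}\setminus\classP$.

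Every radius-$t$ view still has only $O(t\log n)$ bits. So with $n=2^{|\varphi|}$, an \LPstar algorithm runs in $\poly(|\varphi|)$ time per node, your contradiction argument goes through, and the small-ID trick becomes unnecessary. Two details in that argument also need repair. First, you must simulate all nodes within distance $t$ of the labeled endpoint, not just the endpoint, since they also see $\varphi$. Second, the claim that the remaining nodes accept needs a comparison with a yes-instance of the same length carrying some $\psi\in T$, in which those nodes have identical views.
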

By Lemma~\ref{lem:lpeqpld}, it suffices to prove that 
$\LPstar \subsetneqq \classP \cap \LD$, which we do in the following two claims.
        \begin{claim}
            $\LPstar \subseteq \classP \cap \LD$.
        \end{claim}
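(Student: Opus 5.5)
The claim has two parts, and I will prove them separately.

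\emph{Containment in \LD.} By definition, an \LPstar algorithm is an \LD-algorithm with constant locality $t$. Its running-time bound only strengthens the requirement, so it decides the same language in the \LD sense. The one subtlety is that \LD requires correctness for every (globally unique) identifier assignment. \LPstar assumes exactly this, so no adjustment is needed.

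\emph{Containment in \classP.} A centralized machine gets the configuration $(G,x)$ and must decide membership. It first assigns identifiers $1,\dots,n$, which are globally unique and encoded on $O(\log n)$ bits. For each node $v$ it then does the following:
\begin{enumerate}
\item compute the $t$-neighborhood $N^t_{G,x,id}(v)$ by BFS, which takes polynomial time;
\item encode this neighborhood as the algorithm's input;
\item simulate the \LPstar algorithm for $P(|N^{t}_{G,x,id}(v)|)$ steps.
\end{enumerate}
The machine accepts if and only if every simulated node accepts.

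\emph{Running time.} The neighborhood size is polynomial in the input length, because inputs are assumed polynomially bounded in $n$ and identifiers have $O(\log n)$ bits. Each simulation is therefore polynomial, and there are $n$ of them, so the whole procedure runs in polynomial time.

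\emph{Correctness.} The \LPstar algorithm accepts at all nodes for every unique identifier assignment exactly when $(G,x)\in\genericlang$. In particular this holds for the assignment chosen by the machine.

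\emph{Main obstacle.} The only step needing care is the encoding convention. The centralized input length must be polynomially related to the distributed view sizes, and the identifiers the machine picks must satisfy the model's requirements, namely $O(\log n)$ bits and global uniqueness. Both follow from the standing assumptions on inputs and identifiers in Section~\ref{sec:perlim}, so I expect this step to be routine rather than a genuine difficulty.
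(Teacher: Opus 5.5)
Your proposal is correct and follows the same route as the paper's one-line proof: an \LPstar algorithm is already an \LD-algorithm (both require correctness exactly for globally unique $O(\log n)$-bit identifiers), and a centralized machine can fix one valid identifier assignment and simulate the algorithm at every node within its polynomial time bound. You simply spell out the simulation, running-time accounting and correctness argument that the paper leaves implicit.
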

        \begin{proof}
            Every $\LPstar$ algorithm can be simulated by a centralized machine in $\poly  n$ time. Additionally, every $\LPstar$ algorithm is also an $\LD$ algorithm.
        \end{proof}
        \begin{claim}
            $\LPstar \neq \classP \cap \LD$.
        \end{claim}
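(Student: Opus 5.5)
We will exhibit a language in $\classP \cap \LD$ that is not in $\LPstar$. The idea is the one behind Lemma~\ref{lem:lpeqpld}: a node's time budget is polynomial in its view, so a language whose local check needs super-polynomial time in the view, but only polynomial time in $n$, separates the classes. The natural way to give the centralized machine enough time is to pad the graph with many nodes that the deciding node cannot see. Concretely, fix a language $\genericlang_0 \in \textsf{EXP}\setminus\classP$, say decidable in time $2^{|x|}$ but not in polynomial time; it exists by the time hierarchy theorem. We define $\genericlang$ as the set of configurations $(G,x)$ such that:
\begin{itemize}
\item[(i)] $G$ is a path;
\item[(ii)] exactly one endpoint carries a label $x$, and all other nodes are unlabeled;
\item[(iii)] the number of nodes is at least $2^{|x|}$;
\item[(iv)] $x\in\genericlang_0$.
\end{itemize}

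First I would check $\genericlang\in\LD$ with locality $1$. Each node verifies that its degree is at most $2$, and that it is either unlabeled or is a labeled endpoint. The labeled node additionally checks $x\in\genericlang_0$. Any node can reject cycles or a second label only in combination with conditions that are globally checkable, so I would refine (ii)--(iii) so they are locally checkable.

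Rather than demanding a length the node cannot verify, I would make the extra nodes themselves witness the length. Let the path nodes carry counters $1,2,\dots,2^{|x|}$ in binary, with $x$ copied to every node. Each node checks, in time polynomial in its labels, that its counter is one more than its predecessor's and that its copy of $x$ agrees with its neighbors'. The first node checks counter $1$, and the last node checks counter $2^{|x|}$. The labeled endpoint checks $x\in\genericlang_0$, which takes time $2^{|x|}$ and hence is unbounded, and that is allowed in $\LD$. This gives membership in $\LD$. Membership in $\classP$ holds because a centralized machine with input size $\Theta(n)\ge 2^{|x|}$ can run the $2^{|x|}$-time decider for $\genericlang_0$, and all other checks are trivially polynomial.

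The main step is showing $\genericlang\notin\LPstar$. Suppose an $\LPstar$ algorithm $\genericalgo$ of radius $t$ and polynomial $P$ decides $\genericlang$. We build a polynomial-time centralized decider for $\genericlang_0$. On input $x$, construct the configuration consisting only of the first $t+1$ nodes of the path, with counters $1,\dots,t+1$, copies of $x$, and IDs $1,\dots,t+1$. Simulate $\genericalgo$ at the labeled endpoint $u$ for $P(|N^t(u)|)=\poly(|x|)$ steps.

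We then argue that $u$'s verdict must equal $[x\in\genericlang_0]$. Its view is identical to its view in the genuine full-length configuration, which is in $\genericlang$ iff $x\in\genericlang_0$, and all nodes other than the first $t+1$ see only locally consistent views. Therefore, if $x\notin\genericlang_0$, some node must reject in the full configuration. The only node whose view differs from its view in a legitimate yes-instance (with the same pattern of counters but an $x'\in\genericlang_0$ of the same length) is potentially $u$. The delicate point, which I expect to be the main obstacle, is exactly this indistinguishability argument. It must show that every node other than $u$ has a view that also occurs in some yes-instance, so the rejection must happen at $u$. Such yes-instances may not exist for every length, so I would pad $\genericlang_0$ so that every length has a yes-instance, for example by taking $1\genericlang_0\cup 0\{0,1\}^\ast$. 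I also need to check that globally unique IDs can be chosen consistently, which works by giving the initial segment the same IDs in both instances. With this, the simulation decides $\genericlang_0$ in polynomial time, a contradiction.
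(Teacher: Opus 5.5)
\textbf{The step $\genericlang\notin\LPstar$ fails for the language you actually define.} To let the far endpoint check that its counter equals $2^{|x|}$, you copy $x$ to every node. Then every node's radius-$t$ view contains $x$. So your sentence ``the only node whose view differs from its view in a legitimate yes-instance \ldots\ is potentially $u$'' is false: every node's view differs from its view in the instance with $x'$, and nothing confines the rejection to the prefix you simulate. The step is not merely unproven but wrong. The node with counter $c$ sees $x$ and $c$, and in time $\poly(|x|)$ (polynomial in its view) it can test whether the $|x|$-bit string $c-1$ is a short witness that $x\notin\genericlang_0$. Concretely, let $\genericlang_0$ be a suitably encoded, padded set of unsatisfiable CNF formulas. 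It is decidable in time $2^{O(|x|)}$ and lies outside $\classP$ if $\classP\neq\classNP$. For this $\genericlang_0$, the following is an $\LPstar$ algorithm for your language: check the counters, the copies of $x$ and the endpoints, and reject iff the assignment encoded by $c-1$ satisfies $x$. Yet the verdict at $u$ alone does not decide $\genericlang_0$. Your argument uses nothing about $\genericlang_0$ beyond $\genericlang_0\notin\classP$, so it cannot be repaired without changing the language.

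The missing idea is the one the paper's language $\pathlang_{\genericlang}$ is built around. The nodes receive only $x$-independent global data: every node gets $(n,i)$, and $\varphi$ sits at one endpoint only, with $|\varphi|\le\log n$. Then every node at distance more than $r$ from the $\varphi$-endpoint has exactly the same view as in a yes-instance with some $\psi\in\genericlang$, and hence accepts. So in a no-instance the rejection must occur among the nodes nearest that endpoint.

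There is also a second, smaller gap. Even with $x$ stored at $u$ only, all nodes within distance $t$ of $u$ see $x$, so simulating $u$ alone does not suffice. You must build a prefix of more than $2t$ nodes and accept iff all of its first $t+1$ nodes accept, as the paper does with $u_r,\dots,u_1$.

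With two changes your argument becomes essentially the paper's: store copies of $|x|$ rather than $x$ (e.g.\ via fixed-width counters of width $|x|$), and simulate the whole $t$-ball around $u$. Your padding of $\genericlang_0$ then supplies the needed same-length yes-instance.
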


        \begin{proof}
           To show the separation, we consider a language $\genericlang\in \textsf{EXP} \setminus \classP$, which exists by the time hierarchy theorem.
            Using $\genericlang$, we define
            \begin{multline*}
            \pathlang_{\genericlang} = \{ G \mid G \text{ is an $n$-node path with labels }(n,n),\ldots, (n,2), (n,1,\varphi)\\ \text{ where $\varphi$ is a word of size at most $\log{n}$ in $\genericlang$}\}.
            \end{multline*}
            We assume, for the sake of contradiction, that there exists an $\LPstar$ algorithm $\genericalgo$ of locality radius $r$ for $\pathlang_\genericlang$.

            We consider the following centralized algorithm for determining if a word $\varphi$ satisfies $\varphi \in \genericlang$. 
            We consider all $n$ large enough such that $\log n > 2r$ and such that $\psi,\psi'$ used later in the proof are of size at most $\log n$ each.

            \begin{itemize}
                \item Consider the path graph $G'$ with labels $(n,\log n),(n,\log n -1),\cdots,(n,1,\varphi)$
                \item Simulate $\genericalgo$ on $G'$
                \item Accept if and only if all the node labeled $(n,r),(n,r-1),\cdots,(n,1,\varphi)$ accept. 
                We call these nodes $u_r,u_{r-1},\cdots,u_{1}$.
            \end{itemize}
            To show the correctness of our algorithm we notice that, in the case where $\genericalgo$ accepts $G$ 
            , all nodes in $G$ accept and therefore $u_r,\cdots,u_1$ must also accept in our algorithm in $G'$ since their local views are the same.

            Now we consider the case that $\genericalgo$ rejects $G$. We can assume that there exists two words $\psi,\psi'$ such that $\psi \in \genericlang$ and $\psi' \notin \genericlang$: both $\varnothing$ and the set $\{0,1\}^*$ are in $\classP$, while $\genericlang\notin\classP$, so $\genericlang$ is neither of them.

            Consider the two paths with labels
            \[G_1: \: (n,n), \ldots, (n,r), \ldots, (n,1,\psi)\,\]
            \[G_2: \: (n,n), \ldots, (n,r), \ldots, (n,1,\psi')\]

            The local view of all nodes $(n,i)$ with $i > r$ is the same in both instances, but $G_1$ should be accepted and while $G_2$ should be rejected. 
            Therefore, all of these nodes must accept (in both cases), and the only rejecting nodes can be the ones with $i \le r$. Therefore, our algorithm sees a rejecting node in one of $u_r,\cdots,u_1$.

            Since $\genericalgo$ is an $\LPstar$ algorithm, it  runs in each node in time  polynomial with respect to the size of its neighborhood, which is of size $O(\log n)$.
            Therefore, our algorithm simulates $2r$ many executions of $\genericalgo$, each taking $\text{poly}(\log n)$ time, for a total of $\text{poly}(\log n)$ time. 
            Hence, our centralized algorithm $\genericlang'$ is a poly-time verifier for $\genericlang$: it runs in $\text{poly}(\log n)$ time on inputs of size $\log n$.
            But we assumed $\genericlang \in \textsf{EXP}\setminus\classP$, a contradiction. Therefore, such a \PLD algorithm $\genericalgo$ cannot exist.
        \end{proof}

\section{Universal Certificates and the Graph Size}
\label{sec:poly-certificates}

In this section we discuss universal certificates. 
We show that giving certificates of length polynomial in the size of the graph end up ``leaking'' information to the nodes. 
We then show that this only gives a lower bound on $n$, knowing $n$ exactly makes the model even more powerful.
Unlike using certificates that are polynomial in $n$, we show that using certificates polynomial in the view of each node behaves ``as expected'', i.e., is strictly weaker than having unbounded certificates.
We conclude with a few implications of the aforementioned results, at the form of extremely  simple proofs of several separation results.

\subsection{Bounded Certificates Leak Information}    
\label{subsec:bounded-leak}

One might guess that having a bound on the size of certificates limits the power of the model, and this is indeed true in many cases. 
The next result, however, shows the opposite may also happen: the size limit might enable to solve problems which couldn't be solved with unbounded certificates.
    It is easy to see that $\node_{\Lsp} \in \PiOneLocal \setminus \PiOnePLocal.$
    The next result shows that $\PiOnePLocal$ and $\PiOneLocal$ are  incomparable.

\begin{lemma}\label{lem:pip<pi}
        $\PiOnePLocal \not\subset \PiOneLocal$.
    \end{lemma}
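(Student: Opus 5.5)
We use the language \AGTG already announced in the introduction: the set of configurations $(G,x)$ where $G$ is an $n$-node path and every node $v$ carries an input $x_v$ with $x_v>n$. Since inputs are assumed polynomially bounded in $n$, we may take them of size, say, $x_v\le n^2$, so this is a legitimate language. The proof has two parts: show $\AGTG\in\PiOnePLocal$, and show $\AGTG\notin\PiOneLocal$.

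For membership, we take $Q(n)=n$ and radius $1$. Node $v$ first checks that its degree is at most $2$. This locally verifies path-ness, except for cycles; we handle those by including in the language the analogous cycle configurations, or equivalently by letting the endpoints carry a marker, and we fix this detail at the end. The key step is that $v$ then reads its certificate $c(v)$ and rejects iff $|c(v)|\ge x_v$, with the length compared in binary. This takes time polynomial in $|c(v)|+\log x_v$, hence polynomial in $n$, since $|c(v)|\le n$.

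We then check both directions.
\begin{itemize}
\item If every $x_v>n$, then every admissible certificate has $|c(v)|\le n<x_v$, so all nodes accept for all $c\le Q$.
\item If some $x_v\le n$, the certificate assignment that gives $v$ a string of length exactly $x_v\le Q(n)$ makes $v$ reject, for every id.
\end{itemize}
The crucial point is that the universal quantifier over bounded certificates leaks the lower bound ``$n\ge |c(v)|$''.

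For non-membership in $\PiOneLocal$, suppose an $\LD$-algorithm $\genericalgo$ of radius $t$ decides \AGTG in the $\Pi_1$ sense. Take a long path of $n$ nodes, all labeled $x_v=n+1$; this instance is in \AGTG, so $\genericalgo$ accepts for all certificates and all ids. Now extend it to a path of $n'\gg n$ nodes in which every node still carries label $n+1$; this instance is not in \AGTG. By definition there is then a certificate $c'$ such that $\genericalgo$ rejects for every id, so some node $w$ rejects. The $t$-neighborhood of $w$, with its labels, certificates and ids, is a path segment of at most $2t+1$ nodes, all labeled $n+1$, possibly including an endpoint. We choose $n>2t+1$ so that the same segment, with the same certificates and ids, embeds into the short yes-instance: at the matching end if $w$'s view contains an endpoint, in the interior otherwise. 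We complete the certificates arbitrarily and the ids uniquely, which is possible since ids are only required to be unique and a window of size $2t+1$ can be extended. Then $w$ rejects a yes-instance under some certificate, contradicting the requirement that yes-instances be accepted for all certificates.

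The main obstacle is the bookkeeping of this indistinguishability argument. The ids in $w$'s view may be large, up to polynomial in $n'$, which could violate the $O(\log n)$ id bound in the smaller instance. We handle this by fixing $c'$ first and letting the adversary pick the ids in the large instance: the definition says rejection holds for all ids, so we choose ids in $\{1,\dots,n\}$ on a window around $w$ and extend them uniquely elsewhere. The remaining care is making the cycle/endpoint detail consistent between the two instances.
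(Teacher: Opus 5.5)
Your membership half is the paper's argument: $Q(n)=n$, and node $v$ rejects iff $|c(v)|\ge x_v$. Your non-membership half is a legitimate variant of the paper's indistinguishability argument. The paper fixes three paths with consecutive ids: $G_1$ ($n$ nodes, labels $n+1,n+2,\dots,n+2$), $G_2$ ($n+1$ nodes, all labels $n+2$), and the no-instance $G_3$ ($n+1$ nodes, labels $n+1,n+2,\dots,n+2$). It then observes that, under any certificate, each node of $G_3$ has literally its view in $G_1$ or in $G_2$. So it never has to locate a rejecting node or re-embed a window. Your version with uniform labels $n+1$ also works, and it shows that the $n+1$ versus $n+2$ pattern is inessential.

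The step you flag as the main obstacle does not work as written, however. The rejecting node $w$ is only determined once an id assignment is fixed, so ``choose ids in $\{1,\dots,n\}$ on a window around $w$'' is circular. After you change the ids, a different node may be the one that rejects, possibly inside a window with large ids. Moreover, if $n'$ exceeds the number of ids that are legal in an $n$-node graph, no single assignment makes every window legal.

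The fix is to drop $n'\gg n$:
\begin{itemize}
\item Take $n'=n+1$, which already makes the all-$(n+1)$ path a no-instance, and give it the consecutive ids $1,\dots,n+1$.
\item Every window is then legal for $n$ nodes.
\item For $n$ large compared to $t$, each node of this path has the same view (labels, certificates and ids) as in one of two $n$-node yes-instances: the one obtained by deleting the last node (ids $1,\dots,n$), or the one obtained by deleting the first node (ids $2,\dots,n+1$).
\end{itemize}
This is exactly the paper's cover argument with your labels.

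Your worry about cycles is legitimate: the paper's algorithm does not check path-ness at all and implicitly treats it as a promise. But your two proposed fixes are not equivalent.
\begin{itemize}
\item Marking endpoints does not help. For large $n$, an unmarked $n$-node cycle with all labels $n+1$ is, under every certificate of length at most $n$, locally indistinguishable from the interior of a marked $n$-node path with the same labels. That path is a yes-instance, so the cycle is accepted.
\item Adding cycles to the language does work, as does simply dropping the path requirement and keeping only ``$x_v>n$ for all $v$''. Your non-membership argument is unaffected, since it only uses paths.
\end{itemize}
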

    We consider the language $\AGTG$ (All Greater Than size of Graph) of $n$-node path with labels
    \[x_1,\cdots,x_n\]
    such that for all $i$, $x_i > n$. 
    We show that this language is in $\PiOnePLocal$ but not in $\PiOneLocal$.

    \begin{claim}
        $\AGTG\in \PiOnePLocal$.
    \end{claim}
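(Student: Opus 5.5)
We use the algorithm hinted at in the introduction, with certificate bound $Q(n)=n$ and radius $1$. Each node $v$ checks three things: its local view is consistent with being on a path (degree at most $2$); its label $x_v$ is larger than $|c(v)|$; and, to guard against a too-short path, its label exceeds the number of nodes it sees. It accepts iff all checks pass. The running time is linear in the size of the input and certificate, which are polynomially bounded in $n$ by the model's assumptions. So we may take $P$ to be a suitable polynomial.

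For completeness, let $(G,x)\in\AGTG$. Then $G$ is a path and $x_i>n$ for every $i$. Any certificate with $|c(v)|\le Q(n)=n$ satisfies $|c(v)|\le n<x_v$, so every node accepts, for every ID assignment.

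For soundness, let $(G,x)\notin\AGTG$.
\begin{itemize}
\item If $G$ is not a path, either some node has degree at least $3$ and rejects, or $G$ is a cycle. Cycles must be handled separately: we add a check that the path has endpoints. This can be done by asking each node to see an endpoint via the certificate content, or simply by noting that a cycle is locally indistinguishable from a long path. So we will instead restrict attention to configurations promised to be paths, or adjust the checks if the model requires. This cycle case is the subtle point we need to settle.
\item If $G$ is a path but some $x_i\le n$, choose the certificate assignment $c$ giving node $i$ a string of length exactly $x_i$, which is at most $n=Q(n)$ and hence allowed, and empty strings to all other nodes. Node $i$ then sees $|c(i)|=x_i\not<x_i$ and rejects, whatever the IDs.
\end{itemize}
Note that $c$ is chosen before the IDs, as the definition of $\PiOnePLocal$ requires.

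The main obstacle is the cycle issue, that is, whether ``path'' can be locally certified under a universal quantifier. We would first try reading the language as being over path configurations (as the paper's phrasing suggests). Failing that, we would make endpoint-related checks deterministic, for example by labeling the endpoints in the input, so that no certificate is needed.
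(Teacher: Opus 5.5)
Restricted to path inputs, your argument is the paper's: $Q(n)=n$, node $v$ accepts iff $x_v>|c(v)|$, completeness because every admissible certificate has length at most $n<x_v$, and soundness by giving a violating node a long certificate. The paper uses length $n$, which works without caring whether $x_i\ge 0$. Your degree check and your ``label exceeds the number of nodes seen'' check are redundant but harmless.

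Where you go wrong is in treating the cycle case as an obstacle still to be settled. The paper's proof never looks at the graph structure: it tacitly reads $\AGTG$ as a language over path configurations. This reading is not a convenience but a necessity. If cycles were admissible inputs, no $\PiOnePLocal$ algorithm of radius $r$ could reject the $n$-node cycle with all labels equal to $n+1$, for $n\ge 2r+3$. Suppose that under some admissible $c$ and some $id$ a node $v$ rejected. Delete an edge of the cycle both of whose endpoints are at distance more than $r$ from $v$. The resulting $n$-node path is in $\AGTG$, and the same $c$ is still admissible because $n$ is unchanged. Node $v$'s view is also identical, so $v$ would reject a yes-instance.

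The same argument shows your fallback fixes cannot work. Under a universal quantifier the certificate content is adversarial, so it can only trigger rejections and can never vouch for the existence of an endpoint. Labeling endpoints in the input does not help either: a cycle carries no endpoint labels, and the cutting argument still applies after you label the two new endpoints, which lie outside $v$'s view. So drop the hedging and commit to the path reading; with it your proof is complete and coincides with the paper's.
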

    \begin{proof}
        Recall that a language is in $\PiOnePLocal$ if there exists a polynomial $Q$ and an algorithm that uses certificates up to size $Q(n)$ on an $n$-node graph.
        To prove $\AGTG\in \PiOnePLocal$, we use $Q(n)=n$ as the polynomial bounding the certificates, and the following local decision algorithm.
        
        For a node $u \in V$, let $C(u)$ denote the certificate it receives, and $|C(u)|$ its length.
        Node $u$ accepts if $x_u>|C(u)|$, and rejects otherwise. 
        If $x_u>n$ then $x_u>|C(u)|$ for all the certificates, and $u$ always accepts; if $x_u>n$ for all nodes $u$, then they all accept.
        Conversely, if $x_v\leq n$ for some node $v$ then this node will reject, e.g., for a certificate with $|C(v)|=n$.
    \end{proof} 

    \begin{claim}
        $\AGTG\notin \PiOneLocal$.
    \end{claim}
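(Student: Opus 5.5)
The plan is an indistinguishability argument. It exploits the fact that in $\PiOneLocal$ the certificates are unbounded, so they leak nothing about $n$. Suppose for contradiction that a $t$-local algorithm $\genericalgo$ decides $\AGTG$ in the sense of $\PiOneLocal$. Fix an integer $m$ with $m-1 \geq 2t+2$, and let $N = m+1$. Consider the path $H$ on $N$ nodes in which every node has label $m$. Since $m \le N$, we have $H \notin \AGTG$. The definition of $\PiOneLocal$ then gives a certificate assignment $c$ such that, for \emph{every} identifier assignment, some node rejects. We fix the particular assignment $id$ giving the nodes of $H$ the identifiers $1,\ldots,N$ along the path, and we let $v$ be a node that rejects on $(H,(x,c),id)$.

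Next we build a yes-instance in which some node has exactly the same view as $v$. In a path, the $t$-neighborhood $B$ of $v$ is a subpath of at most $2t+1$ nodes, and it contains an endpoint of $H$ exactly when $v$ is within distance $t$ of that endpoint. We construct a path $H'$ on $n' = m-1$ nodes as follows. It contains $B$ with the same labels $m$, the same certificates (copied from $c$) and the same identifiers. We then attach fresh nodes with label $m$ only at the ends of $B$ that are not endpoints of $H$, until the total number of nodes is $n'$. This is possible because $n' \geq 2t+2 > |B|$, and at least one end of $B$ is not an endpoint of $H$, since $|B| \le 2t+1 < N$. The added nodes receive arbitrary certificates and new identifiers from $\{N+1,\ldots,N+n'\}$. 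All identifiers are then distinct and bounded by $O(n')$, so they are legal $O(\log n')$-bit identifiers. Every label in $H'$ equals $m > n'$, hence $H' \in \AGTG$.

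It remains to check that the view of $v$ is unchanged. The $t$-ball of $v$ in $H'$ consists of the same nodes, edges, labels, certificates and identifiers as in $H$. The only nodes added are at distance greater than $t$ from $v$, because they hang beyond the boundary of $B$. Moreover, whether $v$ sees a path endpoint within distance $t$ is preserved, since we extend exactly at the non-endpoint ends. So $\genericalgo$ rejects at $v$ in $(H',(x,c'),id')$. This contradicts the requirement of $\PiOneLocal$ that on a yes-instance all nodes accept for all certificates and all identifiers.

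The main obstacle is the identifier bookkeeping. Identifiers must remain $O(\log n)$-bit in the smaller instance, which forbids taking the yes-instance to be just the ball $B$ with its possibly large identifiers copied. This is handled by choosing the no-instance $H$ only slightly longer than $m$ (namely $N = m+1$) and the yes-instance only slightly shorter ($n' = m-1$). Then the identifiers copied from $H$ are at most $N = n'+2$, and we can exercise our freedom to pick the identifier assignment on $H$, since the rejection holds for all identifier assignments. A secondary care point is preserving the endpoint and degree structure in the view, which is addressed by extending only at non-endpoint ends of $B$.
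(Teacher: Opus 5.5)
Your proof is correct and uses essentially the paper's indistinguishability argument: both rely on unbounded certificates being copyable into a yes-instance of a different size, together with small consecutive identifiers. The paper covers every node of an $(n+1)$-node no-instance with one defective label $n+1$ by views from an $n$-node and an $(n+1)$-node yes-instance, while your uniformly labeled no-instance lets a single shorter yes-instance built around the rejecting ball suffice. One small point: if views include the degrees of nodes at distance exactly $t$, take $m-1\ge 2t+3$, so that each non-endpoint end of $B$ can actually receive a new node; the paper's condition $n>2r$ has the same minor slack.
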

    \begin{proof}
    Assume for contradiction that there is an algorithm \genericalgo for $\AGTG$, with locality $r$.
    Consider three path graphs with input labels as follows, where we take $n$ large enough so that $n > 2r$.
    \begin{itemize}
        \item $G_1\in \AGTG$ is an $n$-node graph with labels
        \[n+1,n+2,\ldots,n+2\hspace{6ex}\]
        \item $G_2\in \AGTG$ is an $(n+1)$-node graph with labels
        \[ n+2,n+2,\ldots,n+2,n+2\]
        \item $G_3\notin \AGTG$ is an $(n+1)$-node graph with labels
        \[ n+1,n+2,\ldots,n+2,n+2\]
    \end{itemize}
    In all these graphs, the IDs start by 1 on the left and grow consecutively.

    Note that $G_1,G_2\in \AGTG$, so in both graphs, any assignment of certificates to the nodes make all of them accept.
    Consider an assignment of certificates to $G_3$.
    The nodes $1,\ldots,r+1$ have the same view as in the same assignment to $G_1$, and thus they all accept.
    The other nodes of $G_3$, $r+2,\ldots,n+1$ have the same view as in the same assignment to $G_2$, so they all accept as well.
    Hence, all nodes must accept under all certificate assignments;
    however, $G_3\notin \AGTG$, rendering $\genericalgo$ wrong.
    \end{proof}
    
    While $\AGTG\in\PiOnePLocal$ was a surprising result that required some work, 
It can be easily seen that  $\AGTG\in\SigmaOnePLocal$ (i.e., \NPLD).

\begin{claim}
        $\AGTG \in \NPLD$.
    \end{claim}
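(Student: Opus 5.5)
To show $\AGTG \in \NPLD$, we let the certificate tell every node the graph size and make it locally checkable. For an $n$-node configuration $(G,x)$, the prover gives each node $u$ the certificate $c(u) = (n, d_u)$. Here $d_u$ is the distance from $u$ to a fixed endpoint of the path, taking values $0,\ldots,n-1$. Each certificate has $O(\log n)$ bits, so it fits within $Q(n)=O(\log n)$, which we may take to be, say, $Q(n)=n$. The verifier has radius $1$. Node $u$ accepts iff all of the following hold: it has degree at most $2$; all neighbours carry the same value $n$ as $u$; the distance counters are consistent (a node of degree $1$ has counter $0$ or $n-1$, and along edges the counters differ by exactly $1$); and finally $x_u > n$. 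All these checks involve numbers of $O(\log n)$ bits together with the input $x_u$, which is polynomial in $n$ by assumption. The running time is therefore polynomial in $n$.

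For completeness, assume $(G,x)\in\AGTG$. The honest certificate described above passes every check at every node, for every ID assignment, since the verifier ignores IDs.

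For soundness, suppose every node accepts under some certificate. First, all nodes report the same value $m$, since $G$ is connected and adjacent nodes must agree. Next, every degree is at most $2$, and the counters strictly change by $1$ along edges, so they are consistent with a path. This rules out cycles: along a cycle the $\pm1$ steps would have to wrap around, which is impossible given the endpoint conditions.

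The main obstacle is pinning down $m=n$. The plan is to make the counters enforce it. We require each node to point consistently in one direction (the neighbour with counter $d_u-1$ and the one with $d_u+1$). We also require that counter $0$ appears exactly at a degree-$1$ node, and that counter $m-1$ appears exactly at a degree-$1$ node, with no counter value reaching $m$ or below $0$. Then the counters along the path run monotonically from $0$ to $m-1$, which forces the number of nodes to be exactly $m$. One must also handle the single-node case and the potential cycle case, where counters could decrease then increase. Requiring that a degree-$2$ node has one neighbour at $d_u-1$ and the other at $d_u+1$ excludes local extrema, so the counters are monotone and a cycle is impossible. Once $m=n$ is established, the check $x_u>m$ at every node gives $x_u>n$ for all $u$, so $(G,x)\in\AGTG$.
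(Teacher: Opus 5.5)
Your proof is correct and is essentially the paper's argument: the certificate tells each node its position along the path and the claimed size, the endpoint conditions force the claimed size to equal the true $n$, and each node then checks $x_u>n$. The paper encodes both numbers in unary as $(1^i,1^n)$ while you use binary, which is immaterial since only a polynomial bound on certificate size is needed.

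One presentational point concerns which check actually makes the verifier sound. It is the monotonicity condition you introduce only in the soundness paragraph: a degree-$2$ node must see exactly the counters $d_u-1$ and $d_u+1$. The plain ``differ by exactly $1$ along edges'' check of your first paragraph is not enough. It accepts, e.g., counters $0,1,0$ with $m=2$ on a $3$-node path labeled $3,3,3$, and it accepts alternating counters on an even cycle. That condition should therefore be part of the verifier's definition from the start.
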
    
    \begin{proof}
        The $i^{\text{th}}$ node is given the certificate $(1^i,1^n)$. The nodes can then check that the certificates are increasing consistently and consecutively among their neighbors. 
        A node with only one neighbor then checks that its certificate is $(1,1^n)$ and its neighbor's certificate is $(11,1^n)$, or its certificate is $(1^n,1^n)$ and its neighbors certificate is $(1^{n-1},1^n)$.
        After verifying this, all nodes have the correct value of $n$ and can make sure that their inputs are greater than $n$.
    \end{proof}

With this claim in hand, 
the language \AGTG can also be used to give a very simple proof of the separation $\PLD \subsetneqq \PiOnePLocal \cap \NPLD$, which was also claimed in \cite{TshuvaO23}.

\begin{corollary}
\label{cor:PLD in intersection}
    $\PLD \subsetneqq \PiOnePLocal \cap \NPLD$.
\end{corollary}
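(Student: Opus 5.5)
The corollary has two parts: the containment $\PLD \subseteq \PiOnePLocal \cap \NPLD$, and its strictness, witnessed by the language \AGTG.

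\textbf{Containment.} Let $\genericalgo$ be a \PLD algorithm with running-time polynomial $P$. I would use $\genericalgo$ directly as a $\PiOnePLocal$ algorithm and as an \NPLD algorithm, with $Q\equiv 0$, so the only allowed certificate is the empty one; alternatively, keep any $Q$ and let the algorithm ignore certificates.

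\begin{itemize}
\item For the $\Pi_1$ side: if $(G,x)\in\genericlang$, then for every (empty or ignored) certificate and every ID assignment all nodes accept. If $(G,x)\notin\genericlang$, then for the empty certificate and every ID assignment some node rejects, since a \PLD algorithm rejects for all ID assignments.
\item For the $\Sigma_1$ side the same reasoning applies verbatim.
\end{itemize}

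One point needs a check. Ignoring the certificate must not break the $P(n)$ time bound, because reading a certificate of length up to $Q(n)$ costs time. With $Q\equiv 0$ nothing extra is read. Otherwise the algorithm simply skips the certificate field, at a cost of $O(Q(n))$ polynomial overhead. Either way the running time stays polynomial in $n$. The same argument is also attributed to~\cite{TshuvaO23}, so this direction is routine.

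\textbf{Strictness.} By the two claims above, $\AGTG\in\PiOnePLocal$ and $\AGTG\in\NPLD$. It remains to show $\AGTG\notin\PLD$. Since $\PLD\subseteq\LD$, and $\LD\subseteq\PiOneLocal$ by~\cite{BalliuDFO18} (an \LD algorithm that ignores certificates is a $\PiOneLocal$ algorithm), the claim $\AGTG\notin\PiOneLocal$ immediately gives $\AGTG\notin\LD$, and hence $\AGTG\notin\PLD$.

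As a sanity check, the direct indistinguishability argument also works without certificates. Take the three paths $G_1,G_2,G_3$ from the proof of $\AGTG\notin\PiOneLocal$, with the same consecutive IDs. Each node of $G_3$ has the same view as the corresponding node in $G_1$ or in $G_2$, both of which are accepted. So every node of $G_3$ accepts, even though $G_3\notin\AGTG$, a contradiction.

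\textbf{Main obstacle.} There is no real obstacle here. The only subtlety is the time accounting for certificates in the containment, which is handled by choosing $Q\equiv 0$ or by skipping the certificate field.
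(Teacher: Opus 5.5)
Your proposal is correct and follows the paper's proof. The containment holds because the \PLD algorithm can ignore (or receive empty) certificates, and strictness follows from $\AGTG \in \PiOnePLocal \cap \NPLD$ together with $\AGTG \notin \PiOneLocal \supseteq \LD \supseteq \PLD$; your direct indistinguishability check on $G_1,G_2,G_3$ is valid but redundant.
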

\begin{proof}
    The containment is immediate from the definitions. 
    We already know from Lemma~\ref{lem:pip<pi} that $\AGTG \in \PiOnePLocal \setminus \PiOneLocal$ and we know that $\PLD \subseteq \LD \subseteq \PiOneLocal$.
    As $\AGTG \in \NPLD$, the claim follows.    
\end{proof}
  In Appendix~\ref{app:proof_of_pld_subsetneq} we give another proof of the aforementioned result. 
  While our proof above uses locality, the proof in appendix is based on ideas from \cite{BalliuDFO18} and is reliant on computational constraints. 

\subsection{Knowing $n$ Exactly}
\label{subsec:knowing n}

    Recall that $\Pi_1^{\textsf{P-local}[n]}$ is defined similarly to $\PiOnePLocal$, but giving each node the value of $n$ as additional information.    
    We can easily see that
    $\PiOnePLocal \subseteq \Pi_1^{\textsf{P-local}[n]}$:
    given a $\PiOnePLocal$ algorithm \genericalgo, we can also consider \genericalgo as a $\Pi_1^{\textsf{P-local}[n]}$  algorithm that ignores $n$.
    The previous result implies that polynomial certificates leak some information about $n$, so one may suspect that $\PiOnePLocal = \Pi_1^{\textsf{P-local}[n]}$. 
    The following lemma proves this suspicion to be wrong.
\begin{lemma}
\label{lem:piloc<pilocn}
    $\PiOnePLocal \subsetneqq \Pi_1^{\Plocal[n]}.$
\end{lemma}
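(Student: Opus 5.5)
We would separate the classes with the language $\ALTG$ (All Less Than size of Graph), the mirror image of $\AGTG$: the $n$-node paths with labels $x_1,\ldots,x_n$ such that $x_i<n$ for all $i$. The containment $\PiOnePLocal \subseteq \Pi_1^{\Plocal[n]}$ was already observed above, so it remains to show $\ALTG\in\Pi_1^{\Plocal[n]}\setminus\PiOnePLocal$.

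\textbf{Membership.} The membership $\ALTG\in\Pi_1^{\Plocal[n]}$ is immediate. Each node ignores its certificate and checks that it has degree at most $2$ and that $x_u<n$, using the value $n$ it is given. Path-ness is a local property, except for excluding cycles. To exclude cycles we can either restrict the promise to paths, as was implicitly done for $\AGTG$, or have endpoints check their degree and rely on $n$ together with the IDs. We would follow the paper's convention for $\AGTG$ and treat the input as a path. The running time is clearly polynomial in $n$, since inputs are polynomially bounded.

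\textbf{Non-membership.} The heart of the proof is $\ALTG\notin\PiOnePLocal$, by an indistinguishability argument adapted from the one for $\AGTG\notin\PiOneLocal$. The difference is that now we must respect the certificate bound $Q$. Assume an algorithm $\genericalgo$ with radius $r$ and polynomials $P,Q$. As in the proof of Lemma~\ref{lem:lpeqpld}, we may replace $Q$ by a larger non-decreasing polynomial, since accepting for all certificates up to a larger bound is only harder, and rejecting with some certificate remains possible. Take $n>2r+2$ and IDs $1,2,\ldots$ from left to right. We use three paths:
\begin{itemize}
\item $G_3\notin\ALTG$: an $n$-node path whose rightmost node has label $n$ and whose other nodes have label $1$.
\item $G_1\in\ALTG$: an $N$-node path with $N>n$, which agrees with $G_3$ on the first $n$ nodes (labels and IDs) and is extended to the right by nodes labeled $1$.
\item $G_2\in\ALTG$: an $n$-node path identical to $G_3$, except that the rightmost label is $1$.
\end{itemize}

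Fix any certificate assignment $c$ on $G_3$ with $|c(v)|\le Q(n)$. We transfer $c$ to the two yes-instances:
\begin{itemize}
\item Copy $c$ onto the first $n$ nodes of $G_1$ and give empty certificates elsewhere. This is a legal assignment for $G_1$, because $Q(n)\le Q(N)$ by monotonicity.
\item Copy $c$ verbatim onto $G_2$, which has the same size $n$.
\end{itemize}
Since $G_1,G_2\in\ALTG$, every node of both accepts under these assignments, for every certificate assignment of the allowed size.

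Now check that every node of $G_3$ has a view matching one of these accepting nodes. The nodes at distance more than $r$ from the right endpoint of $G_3$ see exactly the same radius-$r$ view as in $G_2$, so they accept. The remaining nodes are the rightmost $r+1$ nodes of $G_3$. Their views in $G_3$ include the path end, whereas in $G_1$ the path continues. So this pairing fails as stated, and we need to rearrange the construction.

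\textbf{Fixing the endpoint issue.} The fix is to place the offending node in the \emph{interior}. Let $G_3$ be an $n$-node path whose middle node, at distance more than $r$ from both ends, has label $n$, and whose other nodes have label $1$. Let $G_1$ be obtained from $G_3$ by inserting extra label-$1$ nodes far from the middle, say beyond distance $r$ on both sides, giving $N>n$ nodes and assigning new distinct IDs. Then the middle node and its $r$-neighbours have identical views in $G_3$ and $G_1$, including IDs and certificates, and $n<N$ makes $G_1$ a yes-instance. Every other node of $G_3$ has a view identical to its view in $G_2$, which is $G_3$ with the middle label changed to $1$. Hence $\genericalgo$ accepts $G_3$ under every certificate assignment of size at most $Q(n)$. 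This contradicts $G_3\notin\ALTG$, which requires some such certificate to make a node reject.

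\textbf{Main obstacle.} The step we expect to require care is exactly this bookkeeping:
\begin{itemize}
\item certificates legal for $G_3$ must remain legal in the larger graph $G_1$, which is where the monotonicity of $Q$ and the ``up to a given size'' convention are essential;
\item IDs must stay globally unique after insertion;
\item every view must be matched in a yes-instance.
\end{itemize}
The intuition is that bounded certificates leak only a \emph{lower} bound on $n$ (as exploited for $\AGTG$), never an upper bound, and $\ALTG$ needs an upper bound.
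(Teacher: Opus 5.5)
Your final argument is essentially the paper's proof. It uses the same language $\ALTG$, the same trivial membership in $\Pi_1^{\Plocal[n]}$, and the same three-instance cut-and-paste argument: nodes that see the offending label $n$ are matched with a larger yes-instance, and all other nodes with a same-size yes-instance. The paper avoids your endpoint problem more simply, by putting the label $n$ at the left end and appending the extra node at the \emph{right} end. Your interior placement also works, once two details are repaired.

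\textbf{Replacing $Q$.} The step ``we may replace $Q$ by a larger non-decreasing polynomial'' is not valid, and your own justification shows why. Enlarging $Q$ enlarges the range of the universal quantifier in the yes-condition, so the same algorithm need not remain correct. Nor can it be patched to accept on oversized certificates, because nodes do not know $n$. For example, the paper's algorithm for $\AGTG$ is correct for $Q(n)=n$ but fails for $Q(n)=n+1$; see also the closing remark in the proof of Lemma~\ref{lem:pilp<pioneploc}. The analogy with Lemma~\ref{lem:lpeqpld} breaks because there the enlarged polynomial only bounds running time, not a quantifier range. You do not need this step: keep the original $Q$ and take $n$ large enough that $Q$ is non-decreasing from $n$ on, so that $Q(n)\le Q(N)$. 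This is exactly what the paper does by requiring $Q(n+1)\ge Q(n)$.

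\textbf{Where to insert.} In the interior construction, the inserted nodes must lie at distance more than $2r$, not $r$, from the offending node, because the views of its $r$-neighbours reach distance $2r$. If a node is inserted at distance $r+1$, the node at distance $r$ gets a new neighbour, and its view matches neither $G_1$ nor $G_2$. So $n$ must be taken large enough to leave room for this, e.g.\ $n>4r+2$ with the offending node in the middle, rather than just $n>2r+2$.
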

We consider the language $\ALTG$ (All Lesser Than size of Graph) of $n$-node paths with labels
        \[x_n,x_{n-1},\ldots,x_1\]
        such that for all $i$, $x_i < n$. 
        We show that this language is in $\Pi_1^{\textsf{P-local}[n]}$ but not in $\PiOnePLocal$.
    \begin{claim}
        $\ALTG \in \Pi_1^{\Plocal[n]}$.
    \end{claim}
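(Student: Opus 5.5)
The plan is to give an explicit $\Pi_1^{\Plocal[n]}$ algorithm for \ALTG that does not use certificates at all. The knowledge of $n$ does all the work: every node $u$ is given $n$, so it can compare its own label with $n$ directly. We take the certificate bound to be $Q(n)=0$ (or any polynomial; the certificates are simply ignored) and use locality radius $1$.

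The algorithm at node $u$ is as follows.
\begin{enumerate}
\item Node $u$ checks the local structure of a path: its degree is at most $2$, and the number of degree-$1$ nodes in its closed neighborhood is consistent with being on a path. If $n=1$, the node must have degree $0$.
\item Node $u$ accepts if and only if $x_u<n$.
\end{enumerate}
Since the certificates are ignored, the universal quantifier over $c\le Q$ is trivial. Hence the algorithm accepts under all certificates and all ID assignments exactly when every node accepts on $(G,x)$.

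The running time is polynomial in $n$. Node $u$ reads the value $n$, which takes $O(\log n)$ bits, and its own label, which by the model's convention is bounded polynomially in $n$. Comparing two numbers of this size takes polynomial time. The degree checks are likewise polynomial in the size of the radius-$1$ view, which is polynomial in $n$.

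For correctness, suppose $(G,x)\in\ALTG$. Then $G$ is an $n$-node path and $x_i<n$ for all $i$, so every node accepts for every certificate and every ID assignment. Conversely, suppose $(G,x)\notin\ALTG$. Either some label satisfies $x_v\ge n$, and then node $v$ rejects for every certificate (in particular for the empty one, which exists for the $\exists c\le Q$ clause). Or $G$ is not a path.

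The main subtlety is this second case: verifying that $G$ is a path with a constant-radius local check. Degree at most $2$ does not exclude a cycle, and it does not exclude a disjoint union of paths. If the paper's languages are understood, as in the proof of Lemma~\ref{lem:pip<pi}, to be promise languages over path graphs, this issue disappears and the only check needed is $x_u<n$.

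If instead path-ness must be verified, the first thing I would try is to exploit knowledge of $n$ together with the IDs. This does not suffice locally in general, since \LD cannot distinguish a long cycle from a long path. I would therefore adopt the promise interpretation, consistent with the treatment of \AGTG, and state it explicitly, so that the claim reduces to the single comparison $x_u<n$.
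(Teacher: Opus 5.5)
Your proposal is correct and is essentially the paper's own argument: each node knows $n$, so it checks $x_u<n$ directly and ignores the certificates. Your remark that the path structure must be taken as a promise is a fair point that the paper leaves implicit, since an $n$-node cycle is locally indistinguishable from an $n$-node path and so could never be rejected; the paper treats \AGTG the same way.
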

    \begin{proof}
        Each node can check the condition directly since it knows the size of the graph.
    \end{proof}
    \begin{claim}
        $\ALTG \notin \PiOnePLocal$.
    \end{claim}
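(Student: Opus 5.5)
We argue by contradiction, as for $\AGTG\notin\PiOneLocal$, but now the certificate bound $Q(n)$ is the only source of information about $n$. The main obstacle is that this bound really does leak a lower bound on $n$, as in the proof that $\AGTG\in\PiOnePLocal$. For $\ALTG$, however, membership requires an \emph{upper} bound on the labels in terms of $n$. A node that sees a certificate of length $\ell$ learns only that $Q(n)\ge \ell$. This is evidence that the graph is large, so it can only push the node towards accepting labels. It can never certify that $n$ is small.

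Suppose $\genericalgo$ is a $\PiOnePLocal$ algorithm for $\ALTG$ with locality $r$, running time $P$, and certificate bound $Q$, where we may assume $Q$ is non-decreasing. Fix $n$ large with $n>2r+2$, and consider two paths.
\begin{itemize}
\item $G_1\in\ALTG$ has $n+1$ nodes and labels $n,n-1,\ldots,n-1$. All labels are below $n+1$, so $G_1$ is in the language.
\item $G_2\notin\ALTG$ has $n$ nodes and labels $n,n-1,\ldots,n-1$. The first label equals $n$, so $G_2$ is not in the language.
\end{itemize}
Fix identifier assignments that agree near the left end, say IDs $1,2,\ldots$ from the left in both graphs.

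Since $G_2\notin\ALTG$, there is a certificate assignment $c_2$ with $|c_2(v)|\le Q(n)$ for every $v$ under which some node $w$ of $G_2$ rejects. We now build a certificate assignment $c_1$ for $G_1$ under which the node of $G_1$ with the same view as $w$ also rejects.

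If $w$ lies within distance $r$ of the left end, we copy $c_2$ onto the corresponding nodes of $G_1$. Because $n>2r+2$, the radius-$r$ view of $w$, including labels, IDs and certificates, is the same in $G_1$. Certificates of length at most $Q(n)$ also satisfy $|c_1(v)|\le Q(n)\le Q(n+1)$, so $c_1$ is legal for $G_1$.

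If $w$ lies beyond distance $r$ of the left end, its view is a stretch of labels $n-1$ near the right end of the path. We copy this view onto the right end of $G_1$. Its IDs are locally consistent and can be made globally unique by choosing the remaining IDs of $G_1$ appropriately.

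In either case $G_1$ has a rejecting node under a legal certificate assignment. This contradicts $G_1\in\ALTG$, which requires acceptance under all certificates up to $Q(n+1)$.

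The point to verify carefully is the monotonicity of the certificate range. The universal quantifier ranges over certificates \emph{up to} size $Q(n)$, not exactly of that size, so every certificate legal in the smaller graph remains legal in the larger one. This is where the choice of definition made in Section~\ref{sec:perlim} is used. We also need identifiers that are globally unique in both graphs while matching on the relevant view; choosing the IDs by position from the appropriate end makes this routine.
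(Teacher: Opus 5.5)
Your argument is correct and is essentially the paper's indistinguishability argument. The paper uses a third yes-instance, the $n$-node path labeled all $n-1$, to cover the nodes at distance more than $r$ from the left end. You instead re-embed those views at the right end of the $(n+1)$-node yes-instance, shifting the IDs by one and exploiting the $\forall id$ quantifier, which works equally well.

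The one thing to fix is ``we may assume $Q$ is non-decreasing.'' In a $\Pi_1$ class you cannot replace $Q$ by a larger polynomial without loss of generality, since that enlarges the universal quantifier; the paper makes this point in the proof of Lemma~\ref{lem:pilp<pioneploc}. You only need $Q(n)\le Q(n+1)$ at your chosen $n$, which holds for all sufficiently large $n$, and that is exactly how the paper chooses $n$.
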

    \begin{proof}
        Assume there exists a $\PiOnePLocal$ algorithm \genericalgo for $\ALTG$ with locality radius $r$. 
        Let $Q(n)$ be the bound on certificate sizes as a function of $n$.

        Take $n$ large enough so that $n>2r$ and $Q(n+1) \geq Q(n)$, and consider the following instances
        \begin{itemize}
            \item $G_1\in \ALTG$ is an $(n+1)$-node graph labeled
            \[n,\hspace{3ex}n-1,\ldots,n-1,n-1;\]
            \item $G_2\in \ALTG$ is an $n$-node graph labeled \[n-1,n-1,\ldots,n-1;\hspace{5.5ex}\]
            \item $G_3\notin \ALTG$ is an $n$-node graph labeled
            \[n,\hspace{3ex}n-1,\ldots,n-1.\hspace{6ex}\]
        \end{itemize}
          We assume the IDs of the nodes start with $1$ on the left and increase consecutively.
          $G_1$~and~$G_2$ are clearly accepting instances,
          so in both, all nodes accept under all certificate assignments.
          
          Now, the view of the $r+1$ left-most vertices is the same in both $G_1$ and $G_3$, so they must behave the same in both graphs. Similarly, the views of the nodes from $r+1$ to $n$ is the same in both $G_2$ and $G_3$, so they must behave the same in those graphs. 
          Therefore, for any certificate assignment, each node of $G_3$ must behave as a node of $G_1$ or of $G_2$ under the same certificate assignment (and this assignment can occur in $G_1$ since $Q(n+1) \geq Q(n)$), and accept.
          However, 
          $G_3\notin \ALTG$,
          a contradiction.
      \end{proof}

      We now show, as corollary of this lemma and Lemma~\ref{lem:pip<pi}, that $ \Pi_1^{\Plocal[n]} $ and $ \PiOneLocal$ are incomparable.

      \begin{corollary}
      \label{cor:pioneplocn<pioneloc}
      $ \Pi_1^{\Plocal[n]} $ and $ \PiOneLocal$ are incomparable.      
      \end{corollary}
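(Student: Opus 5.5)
To prove the incomparability we exhibit two witness languages, one for each direction, using results already established.

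\emph{A language in $\PiOneLocal$ but not in $\Pi_1^{\Plocal[n]}$.} We use $\node_{\Lsp}$. It is in $\LD$, and $\LD \subseteq \PiOneLocal$ since an \LD algorithm can simply ignore its certificates. The remark after Corollary~\ref{cor:PHvsLD} states that the corollary still holds when the nodes know $n$. Indeed, a centralized machine can write $n$ itself and simulate a $\Pi_1^{\Plocal[n]}$ algorithm inside $\PH \subseteq \textsf{PSPACE}$: it universally quantifies over certificates of size at most $Q(n)$ and runs the $P(n)$-time local algorithm at every node. If $\node_{\Lsp}$ were in $\Pi_1^{\Plocal[n]}$, this would place $\Lsp$ in $\textsf{PSPACE}$, a contradiction. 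Hence $\node_{\Lsp} \in \PiOneLocal \setminus \Pi_1^{\Plocal[n]}$.

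\emph{A language in $\Pi_1^{\Plocal[n]}$ but not in $\PiOneLocal$.} Lemma~\ref{lem:pip<pi} shows $\AGTG \in \PiOnePLocal \setminus \PiOneLocal$. The observation preceding Lemma~\ref{lem:piloc<pilocn} gives $\PiOnePLocal \subseteq \Pi_1^{\Plocal[n]}$, since an algorithm may ignore $n$. Therefore $\AGTG \in \Pi_1^{\Plocal[n]} \setminus \PiOneLocal$. Alternatively, $\AGTG$ is decided directly once $n$ is known: each node checks $x_u > n$.

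Neither direction involves a real obstacle. The only point needing care is justifying that the $\PH$ simulation argument behind Corollary~\ref{cor:PHvsLD} still works with knowledge of $n$. This holds because the centralized simulator receives the whole input graph, so it knows $n$ and can supply it to each simulated node; the certificate bounds and running times stay polynomial in the input size.
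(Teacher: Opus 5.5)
Your proposal is correct and follows the paper's proof. You use $\node_{\Lsp}$, via Corollary~\ref{cor:PHvsLD} and its remark about knowing $n$, to show $\PiOneLocal \not\subseteq \Pi_1^{\Plocal[n]}$, and you use $\AGTG$, via Lemma~\ref{lem:pip<pi} and $\PiOnePLocal \subseteq \Pi_1^{\Plocal[n]}$, for the other direction; the paper argues that second direction by contradiction instead of naming the witness, and you also spell out the $\PH$-simulation step that the paper only asserts.
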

      \begin{proof}
      We notice that $\node_{\Lsp} \in \PiOneLocal \setminus \Pi_1^{\Plocal[n]}$ from Corollary~\ref{cor:PHvsLD}, implying $ \PiOneLocal \not\subset \Pi_1^{\Plocal[n]}$.
      On the other hand,
        if $ \Pi_1^{\Plocal[n]} \subseteq \PiOneLocal$
        then with Lemma~\ref{lem:piloc<pilocn} above, we get $ \Pi_1^{\Plocal} \subseteq \PiOneLocal$.
        But we have proved this to be false in Lemma~\ref{lem:pip<pi}, so $ \Pi_1^{\Plocal[n]} $ and $ \PiOneLocal$ are incomparable.
      \end{proof}

\subsection{Certificates Polynomial in the View} 
\label{subsec:LPcertificates}

In Lemma~\ref{lem:pip<pi} we show that $\PiOnePLocal$ is not contained in $\PiOneLocal$.
We now examine the relation of $\PiOneLP$ to both these classes.
The idea for the inclusions $\PiOneLP \subseteq \PiOneLocal$ and $\PiOneLP \subseteq \PiOnePLocal$ was outlined in ~\cite{Reiter24}. We prove the inclusions here for completeness and also additionally show that the inclusions are strict.

We start by proving 
$\PiOneLP \subsetneqq \PiOneLocal$.
In fact, we prove something stronger: 
$\PiOneLP \subsetneqq \LD.$\begin{lemma}\label{lem:pi<ld}
        $\PiOneLP \subsetneqq \LD \subseteq \PiOneLocal$.
    \end{lemma}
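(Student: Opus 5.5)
The proof splits into three parts: the containment $\LD \subseteq \PiOneLocal$, the containment $\PiOneLP \subseteq \LD$, and the strictness of the latter.

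\textbf{The inclusion $\LD \subseteq \PiOneLocal$.} This is the standard fact from \cite{BalliuDFO18}: a $\PiOneLocal$ algorithm ignores its certificates and runs the \LD algorithm.

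\textbf{The inclusion $\PiOneLP \subseteq \LD$.} Let $\genericalgo$ be a $\PiOneLP$ algorithm with radius $t$ and polynomials $P,Q$. I would build an \LD algorithm $\mathbf{B}$ of radius $2t$. At node $v$, $\mathbf{B}$ collects $N^{2t}_{G,x,id}(v)$. For every node $w \in N^t(v)$ it can compute $|N^t_{G,x,id}(w)|$, since that ball lies inside $v$'s view. This is the reason for doubling the radius: the certificate bound of $w$ depends on $w$'s own $t$-view, not on $v$'s.

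$\mathbf{B}$ then enumerates every assignment $c$ to $N^t(v)$ with $|c(w)| \le Q(|N^t_{G,x,id}(w)|)$ for each such $w$. There are finitely many such assignments, and \LD allows unbounded local computation. For each assignment, $\mathbf{B}$ simulates $\genericalgo$ at $v$. It accepts if and only if $\genericalgo$ accepts at $v$ under every assignment.

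For correctness, first suppose $(G,x)\in\genericlang$. Every restriction of a legal global certificate is a legal local one, and conversely any legal local assignment extends to a legal global one, for example by giving empty certificates outside the ball. Hence all nodes accept under $\mathbf{B}$. Now suppose $(G,x)\notin\genericlang$. Some legal global $c$ makes a node $v$ reject, and the restriction of $c$ to $N^t(v)$ is enumerated by $v$, so $v$ rejects.

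The subtle point is IDs. Globally unique IDs are in particular $t$-locally unique, so the guarantees of $\genericalgo$ apply. The $\PiOneLP$ definition quantifies over certificates after the IDs are fixed, but an \LD algorithm must only be correct for each ID assignment, which matches. I expect this step, the bookkeeping with radius $2t$ and certificate bounds depending on neighbours' views, to be the main care point, though it is not deep.

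\textbf{Strictness.} I would use $\node_{\Lsp}$. By Corollary~\ref{cor:PHvsLD}, $\node_{\Lsp}\in\LD\setminus\PiOneLP$, which finishes the proof immediately. If a "natural" separating language were preferred, \AGTG would not work, since it is not in $\PiOneLocal$ and therefore not in \LD. So the corollary is the right tool.
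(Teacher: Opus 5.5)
Your proposal is correct and follows the paper's proof. The paper builds the same doubled-radius \LD algorithm: it enumerates every certificate assignment to $N^t(v)$ with per-node bounds $Q(|N^t(w)|)$ and accepts iff the simulated $\PiOneLP$ algorithm accepts at $v$ under all of them. It separates with the same language $\node_{\Lsp}$ via Corollary~\ref{cor:PHvsLD}; your explicit extension of local assignments by empty certificates and your remarks on IDs spell out steps the paper leaves implicit.
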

    \begin{proof}
    We start by proving $\PiOneLP \subseteq \LD$.
        Consider a $\Pi_1^{\textsf{LP}}$ algorithm \genericalgo with locality radius $r$ and where the bound on the certificates is $Q(x)$ as a function of the size of the view of each node. 
        We construct an $\LD$ algorithm $\genericalgo'$ of locality radius $2r$ as follows.
        
        Each node $v \in V(G)$ checks its $2r$-neighborhood $N^{2r}(v)$.
        For each node $u$ in its $r$-neighborhood,
        $v$ can compute the size of $u$'s distance-$r$ neighborhood, $|N^{r}(u)|$.
        Then, $v$ simulates \genericalgo on all assignments of certificates to itself and its distance-$r$ neighborhood, where node $u \in N^{r}(v)$ can get certificates of size at most $Q(|N^{r}(u)|)$ .
        Node $v$  accepts if and only if the simulation of \genericalgo accepts on all these certificate.
        
        Our new algorithm is clearly in $\LD$, since there are only finitely many possible certificate.
        To show the correctness of this algorithm, we notice that if 
        \genericalgo is executed on an accepting instance, then every node must accept, and do so for all possible assignments of certificates in its neighborhood;
        hence, every node will accept in $\genericalgo'$ as well. 
        On the other hand, in case of a rejecting instance, there exists a node $v$ and an assignment of certificates to its neighborhood, such that $v$ rejects in $\genericalgo$, and consequently, in $\genericalgo'$ as well. 

        To show that the inclusion is strict, we consider $\node_{\Lsp}.$
        It is easily seen that $\node_{\Lsp}\in \LD$ since a single node is allowed to use unbounded resources for its local computation. 
        On the other hand, $\node_\genericlang \notin \PiOneLP$.
    \end{proof}

Next, we show similar results (containment and strictness) for 
$\PiOnePLocal$.

\begin{lemma}
\label{lem:pilp<pioneploc}
    $\PiOneLP \subsetneqq \PiOnePLocal$.
\end{lemma}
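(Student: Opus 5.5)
The plan has two parts: the containment $\PiOneLP \subseteq \PiOnePLocal$, shown by a radius-doubling simulation as in the proof of Lemma~\ref{lem:pi<ld}, and strictness, shown with the language \AGTG of Lemma~\ref{lem:pip<pi}.

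\textbf{Containment.} Let $\genericalgo$ be a $\PiOneLP$ algorithm with locality $r$, time bound $P$ and certificate bound $Q$, both functions of the view size. I would build a $\PiOnePLocal$ algorithm $\genericalgo'$ of radius $2r$ that ignores its own certificates.

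Node $v$ reads $N^{2r}(v)$. From this it computes $|N^{r}(u)|$ for every $u \in N^r(v)$, and so knows the allowed certificate length $Q(|N^r(u)|)$ for each such $u$. Then $v$ simulates $\genericalgo$ at $v$ on every assignment of certificates to $N^r(v)$ respecting these bounds. It accepts iff all simulations accept.

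Correctness is the same argument as in Lemma~\ref{lem:pi<ld}. Globally unique IDs are in particular $r$-locally unique. On a yes-instance, every node accepts under every admissible assignment, so $\genericalgo'$ accepts. On a no-instance, some admissible assignment makes some node $v$ reject; $v$ meets the restriction of that assignment to $N^r(v)$ in its enumeration and rejects. Because $\genericalgo'$ ignores its certificates, it behaves identically for all $c \le Q'$, which is exactly the $\forall c$/$\exists c$ requirement.

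\textbf{The main obstacle: running time.} A naive enumeration of all assignments takes time about $2^{\sum_u Q(|N^r(u)|)}$, which is exponential in $n$. I would resolve this not by speeding up the simulation but by moving the universal quantifier into the certificates of $\genericalgo'$ itself.

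In the modified construction, each node $w$ interprets its own $\PiOnePLocal$ certificate (truncated or padded) as its $\genericalgo$-certificate of length at most $Q(|N^r(w)|)$. Node $v$ reads the certificates in its $r$-ball, truncates each $c(u)$ to length $Q(|N^r(u)|)$, which it can compute from its $2r$-view, and runs $\genericalgo$ once. This is polynomial in the size of the $2r$-view, hence polynomial in $n$ (inputs and IDs are polynomial in $n$).

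For the certificate bound, take $Q'(n) = Q(n\cdot p(n))$, where $p$ is a polynomial bounding the bits per node. Then every admissible $\genericalgo$-certificate assignment is realizable by some $c \le Q'$, since $|N^r(u)| \le n\,p(n)$ and the bound $Q'$ is an upper bound on length ("up to a given size"). Conversely, every $c \le Q'$ truncates to an admissible assignment. So the $\forall$/$\exists$ conditions transfer exactly, and $Q'$ is fixed before the instance, as the definition requires.

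\textbf{Strictness.} I would use \AGTG. By Lemma~\ref{lem:pip<pi}, $\AGTG \in \PiOnePLocal \setminus \PiOneLocal$. Lemma~\ref{lem:pi<ld} gives $\PiOneLP \subseteq \LD \subseteq \PiOneLocal$, so $\AGTG \notin \PiOneLP$. Hence $\PiOneLP \subsetneqq \PiOnePLocal$.
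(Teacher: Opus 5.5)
Your final construction is essentially the paper's. It uses radius $2r$, reuses the $\PiOnePLocal$ certificates as $\genericalgo$-certificates, bounds them by $Q$ of a polynomial in $n$, and gets strictness from \AGTG together with Lemma~\ref{lem:pi<ld}. The paper has $v$ accept when a certificate in its $r$-ball is oversized, whereas you truncate; either works. A minor slip: $Q(n\,p(n))\ge Q(|N^r(u)|)$ requires $Q$ to be non-decreasing. As the paper does, use a non-decreasing polynomial $\bar Q\ge Q$ for the outer bound while still truncating at $Q$.

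The genuine gap is the claim that ``the $\forall$/$\exists$ conditions transfer exactly''. In $\PiOneLP$ the identifier assignment is quantified \emph{before} the certificates. On a no-instance, the rejecting certificate may therefore depend on $id$, and so does the bound $Q(|N^r_{G,x,id}(u)|)$ at which you truncate. $\PiOnePLocal$ demands $\exists c\,\forall id$, but your argument only yields $\forall id\,\exists c$.

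This really fails. Take the language of configurations whose inputs are all $0$. Let $\genericalgo$, with $Q(x)=x$, accept at $v$ if $x(v)=0$, and otherwise reject iff $c(v)$ is the binary encoding of $id(v)$. This is a correct $\PiOneLP$ algorithm. Your $\genericalgo'$, however, is not a correct $\PiOnePLocal$ algorithm. On an instance with a single node $v$ of input $1$, for every fixed $c$ the adversary picks $id(v)$ different from the value written in $c(v)$, and then all nodes accept.

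Your first, enumerating construction was immune to this precisely because it ignored its certificates. Moving the universal quantifier into the certificates to save time is what reintroduces the problem. The paper's proof takes this same route and does not address the point either.

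One way to repair it is to make the simulation ID-oblivious by putting pseudo-identifiers into the certificates:
\begin{itemize}
\item The certificate of $u$ is a pair $(\iota(u),c(u))$.
\item Node $v$ accepts if the $\iota$-values in its $2r$-ball are not pairwise distinct.
\item Node $v$ also accepts if some $c(u)$ with $u\in N^r(v)$ exceeds $Q$ of $u$'s view size, computed using $\iota$.
\item Otherwise, $v$ runs $\genericalgo$ at $v$ with $\iota$ in place of the real identifiers.
\end{itemize}
On a yes-instance, the simulated view is that of a legal $\PiOneLP$ execution: extend $\iota$ by fresh values outside the $2r$-ball and put empty certificates there. Hence $v$ accepts. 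On a no-instance, fix one legal assignment $id_0$ together with its rejecting certificate $c_0$, and hand out $(id_0,c_0)$. This certificate no longer depends on the real identifiers. If the \LP{} model also bounds identifier sizes, $v$ must additionally check that the $\iota$-values lie in a legal range, which needs some care since $n$ is unknown.
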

 
\begin{proof}
   Given a $\PiOneLP$ algorithm \genericalgo with locality radius $r$ and the bound on the certificates being $Q(x)$ as a function of the size of the view of each node.
    Similar to Lemma~\ref{lem:lpeqpld}, we assume that $Q'$ is a non-decreasing polynomial such that $Q'(x) \ge Q(x)$ for all $x \ge 0$.
    We construct a $\PiOnePLocal$ algorithm $\genericalgo'$ of radius $2r$ and certificates are bounded by $R(x) = Q'(x^2h(x))$ where $h(x)$ is a polynomial upper bound on the size of labels.
    
     Each node $v \in V(G)$ aggregates its $2r$-neighborhood $N^{2r}(v)$.
    For each node $u$ in its $r$-neighborhood,
    $v$ can compute the size of $u$'s distance-$r$ view, denoted $|N^r_{G,x,{id}}(u)|$.

    Node $v$ first examines the certificate sizes: if for some $u \in N^r(v)$ (including $v$ itself), $|c(u)| > Q(|N^r_{G,x,{id}}(u)|)$ then $v$ accepts. 
    Otherwise $v$ runs \genericalgo on its $r$-neighborhood and outputs accordingly.

    Clearly, $\genericalgo'$ is a legitimate $\PiOnePLocal$ algorithm.
    To show the correctness of $\genericalgo'$, we notice that if \genericalgo is executed in an accepting instance then all nodes must accept. 
    For $\genericalgo'$ on the same instance and $v \in V(G)$, either one of the certificates in its $r$-neighborhood was too large (in which case it accepts) or \genericalgo is simulated and $\genericalgo'$ also accepts.
    On the other hand, for a rejecting instance, there must be a node $v$ and an assignment of certificates to its neighborhood such that $v$ rejects when simulating \genericalgo.
    We have taken $Q'$ to be non-decreasing $Q'(x) \ge Q(x)$ for all $x \ge 0$ and $n^2h(n) \ge |N^r_{G,x,{id}}(u)|$ for all $u$. Therefore while simulating $\genericalgo$ all the certificates with size $|c(u)| \le  Q(|N^r_{G,x,{id}}(u)|)$ will be considered. Therefore, the rejecting certificates must also be accounted for when running $\genericalgo'$ and therefore must result in rejection.
    Unlike Lemma~\ref{lem:lpeqpld}, we couldn't directly assume $Q$ is a non-decreasing function since $\genericalgo$ is an arbitrary algorithm and we can only ensure it behaves as expected if certificates are bounded by $Q.$

    For the separation,
    recall that in the proof of Lemma~\ref{lem:pip<pi} we have shown a language $\AGTG$ satisfying $\AGTG\in \PiOnePLocal$ but $\AGTG\notin \PiOneLocal$. Lemma~\ref{lem:pi<ld} asserts that $\PiOneLP \subseteq \PiOneLocal$, and hence $\AGTG\notin \PiOneLP$. The claim follows.
    \end{proof}

\bibliographystyle{alpha}
\bibliography{local-decision-refs}

\appendix

\section{An Alternative Proof of Claim~\ref{cor:PLD in intersection}}
\label{app:proof_of_pld_subsetneq}

Recall that Claim~\ref{cor:PLD in intersection} asserts that $\PLD \subsetneqq \PiOnePLocal \cap \NPLD$.
We give another proof of this claim.
The containment is immediate, so we only have to show separation, which we do next, using the language $\texttt{ITER}$ defined in \cite{BalliuDFO18}.

\begin{claim}
$\texttt{ITER} \in (\NPLD \cap \PiOnePLocal)\setminus \LD$.
\end{claim}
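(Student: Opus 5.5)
The plan is to prove the three memberships separately, after first restating the definition of \texttt{ITER} from \cite{BalliuDFO18}. As I recall it, \texttt{ITER} consists of path configurations whose validity depends on a computational quantity growing with $n$. Concretely, a node must verify that some iterated computation, for example running a machine for a number of steps tied to the graph size, behaves as its label claims. No node can do this from its constant-radius view alone, yet it becomes checkable in polynomial time once a polynomial bound on $n$ is available. The claim splits into $\texttt{ITER}\notin\LD$, $\texttt{ITER}\in\NPLD$ and $\texttt{ITER}\in\PiOnePLocal$. Since $\PLD\subseteq\LD$, the first item alone gives the separation from \PLD, and the other two give membership in the intersection.

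\textbf{Showing $\texttt{ITER}\notin\LD$.} I would reuse the argument of \cite{BalliuDFO18}, which is where the language's computability-based design pays off. Suppose a $t$-local algorithm decides \texttt{ITER}. Then, as in the proof of the claim $\LPstar\neq\classP\cap\LD$, I would build a centralized procedure that simulates the local algorithm on a fixed-radius window of a suitably constructed path. Each time the window looks identical in a yes-instance and in a no-instance, the locality principle forces the same answer there. Chaining this would let the procedure decide a problem that \cite{BalliuDFO18} shows undecidable, for instance determining the iteration behaviour for unbounded $n$. This is a contradiction, since \LD algorithms are computable.

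\textbf{Showing $\texttt{ITER}\in\NPLD$.} I would mimic the certificate for \AGTG. The $i$-th node receives $(1^i,1^n)$, and neighbours check that these counters are consecutive and that the endpoints hold the values $1$ and $n$. Once $n$ is verified, each node can perform the required iterated computation itself in $\poly(n)$ time. If that computation is too long for a single node, the certificates can additionally carry the intermediate states, each checked locally against its neighbour's state. The certificate sizes are polynomial, and soundness follows because no assignment can fake a consistent counter on a no-instance.

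\textbf{Showing $\texttt{ITER}\in\PiOnePLocal$.} I would exploit the leak of the certificate bound, exactly as in the proof that $\AGTG\in\PiOnePLocal$. With $Q(n)=n$, a node reads $|C(u)|$ as a candidate lower bound $m\le n$. It performs the computation with budget $\poly(m)$ and rejects only if it finds a genuine violation within that budget. On yes-instances no certificate exposes a violation, so every node accepts. On no-instances the certificate of length exactly $n$ lets the witnessing node run long enough to find its violation.

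\textbf{Main obstacle.} The delicate step is this last one, together with keeping all running times polynomial in $n$ even on adversarial certificates. The algorithm must stop after $\poly(|C(u)|)$ steps, which is $\poly(n)$ since $|C(u)|\le n$, and it must never reject a yes-instance on a short certificate. I would therefore first check that, in the definition of \texttt{ITER}, every violation is detectable at the time scale set by the true $n$, and that a bound $m<n$ can only cause a missed violation, never a false one.
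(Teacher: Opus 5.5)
For $\PiOnePLocal$ and for $\texttt{ITER}\notin\LD$ you follow the paper. Non-membership is simply \cite[Proposition~7]{BalliuDFO18}. The paper's $\PiOnePLocal$ algorithm is your leak argument instantiated with $Q(n)=n$: all nodes run certificate-independent syntactic checks, and only the pivot uses its certificate. It reads $k=|C(p)|\le n$ as a step budget and rejects iff $M$ is in a rejecting configuration after $k$ steps on both $a$ and $b$. The two properties you flag as the main obstacle do hold here. Each halting time is at most the pivot--endpoint distance, hence $<n$. Halting configurations are absorbing, so on a yes-instance the accepting side is never in a rejecting configuration, whatever $k$ is. Your write-up should state this definition of \texttt{ITER} explicitly, since everything hinges on it.

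The genuine difference is \NPLD. You certify directly: the \AGTG counters give the pivot the true $n$, and it then simulates both runs for $n$ steps. Your fallback of putting intermediate states in the certificates is unnecessary, because those configurations are already in the inputs. The paper argues indirectly via $\texttt{ITER}\in\classP\subseteq\classNP$, then $\texttt{ITER}\in\NLD$ from the inclusion $\PiOnePLocal\subseteq\NLD$, and finally $\NPLD=\NLD\cap\classNP$. That route is shorter on the page but rests on imported identities, and the inclusion $\PiOnePLocal\subseteq\NLD$ fails in general. The cycle analogue of \AGTG is in $\PiOnePLocal$ by the same algorithm. Yet $C_N$ with all labels $N+1$ belongs to it while its $2$-fold cover $C_{2N}$ with the same labels does not, so it is not closed under lifts and hence not in \NLD. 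Membership of \texttt{ITER} in \NLD therefore has to be shown directly anyway, which makes your self-contained certificate the more robust argument.
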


\begin{proof}
We follow the footsteps of in~\cite{BalliuDFO18}, who defined a language $\texttt{ITER}$ to separate $\Pi_1$ from \LD.
The language it composed of labeled paths of the following structure.
One node $p$ is the \emph{pivot} node, and is labeled by a Turing machine $M$ and two inputs $a$ and $b$ to it. 
Each other node $u$ is labeled with the machine $M$, with  its distance $d(u,p)$ from the pivot $p$, and with the configuration of $M$ after $d(u,p)$ steps: when starting from $a$ if it is on the left of $p$ (denoted $\text{config}(M,a,d(u,p))$), and when starting from $b$ when it is on its right (denoted $\text{config}(M,b,d(u,p))$).
Here, the sides are arbitrary, and need only be consistent among the different nodes.
Finally, the endpoints of the path must both contain halting configurations, and at least one of these configurations must also be accepting.

We define a new language $\texttt{ITER}^-$ which is the same as $\texttt{ITER}$ but with the condition that both states at the endpoints only need to be halting.
To show that $\texttt{ITER}^- \in \PiOnePLocal$, we present the following algorithm. 
Each node first checks if all the nodes got the same machine and if the distances they received is consistent with each other. 
The pivot checks if one of its neighbors got $a$ and the other got $b$ as the input to $M$. The nodes then check if the Turing machine configurations are consecutive according to their distance from the pivot. 
The nodes with degree 1 check if their configuration is halting. 
A node rejects if any of these checks fails.

To check $\texttt{ITER}$, we use a similar algorithm, with certificate lengths bounded by the identity polynomial $Q(n)=n$.
Each node first checks if $(G,x) \in \texttt{ITER}^-$, and each node but the pivot decides according to this check.
If the pivot would have rejected in the check for $\texttt{ITER}^-$, it rejects for $\texttt{ITER}$ as well.
Otherwise, the pivot interprets the length of the certificate as a non-negative integer $k \le n$, where $n$ is the number of nodes.
The pivot rejects if both $\text{config}(M,b,k)$ and $\text{config}(M,a,k)$ are  rejecting configurations, and accepts otherwise.
For correctness, first note that the check for $\texttt{ITER}^-$ ensures the syntactic correctness, and that both endpoints of the path have halting configurations. 
If both are rejecting, then the machine halts and rejects on both $a$ and $b$ in less than $n$ steps (for each), as the path length is $n$.
In this case, when the pivot gets a certificate of some length $k\le n$, it will see that both $\text{config}(M,b,k)$ and $\text{config}(M,a,k)$ are rejecting, and will reject.
Finally, we notice that the algorithm runs in polynomial local time (polynomial in $n$).
Therefore, $\texttt{ITER} \in \PiOnePLocal$. 

Proving $\texttt{ITER} \in \NPLD$ directly is not hard. However, we take a different path, utilizing known results regarding relationships between classes of tasks.
First, we notice that it is easy to check in a centralized manner that a labeled graph is in $\texttt{ITER}^-$, by checking the syntax of the labels and simulating $M$ on $a$ and $b$.
Additionally, it is trivial to check that the states at both endpoints are halting and that at least one of them is accepting. 
All this can be done deterministically in time polynomial in the length of the path and the encoding of the machine and states, and therefore $\texttt{ITER} \in \classP \subseteq \classNP$.
We have shown $\texttt{ITER} \in \PiOnePLocal$ and we know that $\PiOnePLocal \subseteq \NLD$, so
$\texttt{ITER}\in \NLD$.
We also know that 
and $\NPLD = \NLD \cap \classNP$,
and therefore we conclude $\texttt{ITER} \in \NPLD$.

Finally, it was proved in in~\cite[Proposition~7]{BalliuDFO18} that $\texttt{ITER} \notin \LD$, and this holds in our case as well.
\end{proof}

\end{document}